\documentclass[journal]{IEEEtran}

\usepackage{graphicx}
\usepackage{amsfonts}
\usepackage{amsmath}
\usepackage{amssymb}
\usepackage{mathrsfs} 
\usepackage{color}
\usepackage{bm}

\newtheorem{thm}{Theorem}

\newtheorem{lem}{Lemma}
\newtheorem{proof}{proof}

\newtheorem{defn}{Definition}
\newtheorem{rem}{Remark}

\ifCLASSINFOpdf \else \fi 
\begin{document}
\title{Analog Error Correcting Codes with Constant Redundancy}

\author{ Wentu~Song, and
         ~Kui~Cai,~\IEEEmembership{Senior Member,~IEEE}
\thanks{Wentu~Song and Kui~Cai are with the Science, Mathematics and Technology
        Cluster, Singapore University of Technology and Design,
        Singapore 487372 (e-mail:
        \{wentu\_song, cai\_kui\}@sutd.edu.sg).
        \emph{Corresponding author: Kui Cai.}}
}

\maketitle

\begin{abstract}
We consider analog error-correcting codes (analog ECCs) that are
designed to correct/detect outlying errors arising in analog
implementations of vector-matrix multiplication. The
error-correction/detection capability of an analog ECC can be
characterized by its height profile, which is expected to be as
small as possible. In this paper, we consider analog ECCs whose
parity check matrix has columns of unit Euclidean norm. We first
present an upper bound on the height profile of such codes as well
as a simple decoder for correcting a single error. We then
construct a family of single error-correcting analog ECCs with
redundancy three for any code length, which have smaller height
profile compared to the known $[n,n-2]$ MDS constructions.
\end{abstract}

\begin{IEEEkeywords}
Approximate computation, vector-matrix multiplication, analog
error-correcting codes, linear codes over the real field.
\end{IEEEkeywords}

\IEEEpeerreviewmaketitle

\section{Introduction}

Analog computing has recently emerged as a promising paradigm for
accelerating linear algebraic operations, particularly
vector-matrix and matrix-matrix multiplications, by exploiting the
physical laws of emerging hardware such as resistive crossbar
arrays \cite{Boser91}$-$\cite{Dupraz23}. However, the intrinsic
analog nature of these devices also introduces non-negligible
computational inaccuracies, stemming from device variability,
noise, defects, and limited precision, which fundamentally
challenges the reliability of analog computation. In
\cite{Roth20}, Roth studied a class of codes, called \emph{analog
error correcting codes}, to handle outlying errors in
vector-matrix multiplication.

Consider the computing task of the multiplication of an
$\ell$-dimensional row vector $\bm u$ and an $\ell\times k$ matrix
$A'$, both over the real field $\mathbb R$, and the desired result
is the vector $\bm c'=\bm u A'$. To correct/detect possible
error(s), an extended vector $\bm c=\bm u[A',A'']=[\bm u A',\bm u
A'']$ is computed, where $A''$ is an $\ell\times r$ matrix which
should be carefully designed for error-correction/detection.
Given any $[n,k]$ linear code $\mathcal C$ over $\mathbb R$, where
$n=k+r$, the matrix $A''$ can be obtained from $\mathcal C$ as
follows. Let $G=[I_k,G']$ be a systematic generator matrix of
$\mathcal C$, where $I_k$ is the $k\times k$ identity matrix. Let
$A''=A'G'$. Then $\bm c=\bm u[A',A'']=\bm u[A',A'G']=\bm uA'G$ is
a codeword of $\mathcal C$. As such, it suffices to consider
linear codes over $\mathbb R$.

An $[n,k]$ linear code $\mathcal C$ over the real field $\mathbb
R$ is a $k$-dimensional subspace of $\mathbb R^n$. For any
codeword $\bm c\in\mathcal C$ to be computed, as errors occur, the
received vector is $\bm y=\bm c+\bm\varepsilon+\bm e$, where
$\bm\varepsilon=(\varepsilon_0,\varepsilon_1,\cdots,
\varepsilon_{n-1})\in\mathbb R^n$ represents the unavoidable and
tolerable small disturbances whose entries are within the interval
$[-\delta,\delta]$ for some prescribed positive $\delta$, and $\bm
e=(e_0,e_1,\cdots, e_{n-1})\in\mathbb R^n$ represents the outlying
errors. The error $e_j$ with $|e_j|>\Delta$ must be located, where
$\Delta>\delta$ is a prescribed positive real number
\cite{Roth20}.

The height profile of any linear code $\mathcal C$ was introduced
in \cite{Roth20} to characterize the error-correction capability
of $\mathcal C$. Specifically, for any
$m\in[n\rangle=\{0,1,\cdots,n-1\}$, the $m$-height of a codeword
in $\mathcal C$ is defined as the ratio between the largest and
the $(m+1)$th largest absolute values of its entries, and the
$m$-height of $\mathcal C$, denoted by $\mathsf{h}_m(\mathcal C)$,
is defined as the maximum of the $m$-height of all its codewords.
In \cite{Roth20}, it was shown that $\mathcal C$ can correct
$\tau$ errors and detect $\sigma$ additional errors with respect
to the threshold pair $(\delta, \Delta)$ if and only if
$2(\mathsf{h}_m(\mathcal C)+1)\leq\frac{\Delta}{\delta}$, where
$m=2\tau+\sigma$. In practice, to reduce the ``gray area'' between
$\delta$ and $\Delta$, the quantity $\Gamma_m(\mathcal
C)\triangleq 2(\mathsf{h}_m(\mathcal C)+1)$ is expected to be
small. In the design of $[n,k]$ linear analog error correcting
code $\mathcal C$, the first important requirement is that
$\Gamma_m(\mathcal C)$ should be as small as possible, for given
$n$, $k$ and $m$. Another important requirement is that the
decoding process should be simple to achieve high computational
throughput with low energy consumption.

It was shown in \cite{Roth20} that $\Gamma_m(\mathcal C)\geq 4$
for any $[n,k]$ linear code $\mathcal C$ and any $m\in[n\rangle$,
and for the case that $(m+1)|n$, $\Gamma_m(\mathcal C)=4$ if and
only if $\mathcal C$ is the $(n/(m+1))$-fold Cartesian power of
the $[m+1,1]$ repetition code. For single-error-detection $($i.e.,
$m=1)$, a family of $[n,k]$ linear codes with $\Gamma_1(\mathcal
C)\leq 2\cdot\lceil n/r\rceil$ was constructed in \cite{Roth20}
for any $1\leq k<n$, where $r=n-k$ is called the \emph{redundancy}
of the code for convenience. For single-error-correction $($i.e.,
$m=2)$, there are three constructions of $[n,k]$ linear codes
presented in \cite{Roth20}, which are: 1) codes with
$r=O(\sqrt{n})$ and $\Gamma_2(\mathcal C)\leq O(n/r)$; 2) codes
with $r=O(\log n)$ and $\Gamma_2(\mathcal C)=O(n/\sqrt{r})$; and
3) MDS codes with $r=2$ and $\Gamma_2(\mathcal C)=O(n^2)$. Single
error correcting MDS codes with $r=2$ and $\Gamma_2(\mathcal
C)=O(n^2)$ are also constructed in \cite{Zhengyi25} and an
effective decoding algorithm was also given in the same paper. A
summary of the known single error correcting $[n,k]$ linear codes
is given in Table 1.

\begin{table}[htbp]
\small
\begin{center}
\renewcommand\arraystretch{1.4}
\begin{tabular}{|p{2.7cm}|p{2.0cm}|p{2.6cm}|}
\hline      ~~~~~~~Reference               &~~~~~~~~$r$    &~~~~~~~$\Gamma_2(\mathcal C)$       \\
\hline      Proposition 6 of \cite{Roth20} &$r=O(\sqrt{n})$&$\Gamma_2(\mathcal C)\leq O(n/r)$   \\
\hline      Section VI of \cite{Roth20}    &$r=O(\log_2 n)$  &$\Gamma_2(\mathcal C)=O(n/\sqrt{r})$\\
\hline      Proposition 11 of \cite{Roth20}&$r=2$          &$\Gamma_2(\mathcal C)=O(n^2)$       \\
\hline      \cite{Zhengyi25}               &$r=2$          &$\Gamma_2(\mathcal C)=O(n^2)$       \\
\hline      This work                      &$r=3$          &$\Gamma_2(\mathcal C)=O(n\sqrt{n})$ \\
\hline
\end{tabular}
\end{center}
\vspace{0mm} Table 1. Summary of single error correcting $[n,k]$
linear codes $\mathcal C$, where $r=n-k$ is the code redundancy.
\end{table}

Construction of $[n,k]$ linear codes correcting multiple errors
$($i.e., $m>2)$ are studied in \cite{Wei24}. The redundancy of the
constructions are either $O(\sqrt{n})$ or $O(\log_2 n)$. The
problem of computing the $m$-height of a code was studied in
\cite{Jiang24}, where a linear programming method of the problem
was presented. In \cite{Jiang24}, a family of $[n=k!, k]$ linear
codes was constructed based on permutations, named permutation
analog codes. Although the time complexity for determining the
$m$-heights of such codes is relatively low, the code rate is only
$1/(k-1)!$. Codes based on the icosahedron and dodecahedron were
considered in \cite{Ziyuan26} and their $m$-heights were analyzed.
The notion of the height profile was further studied in
\cite{Roth26}.

In this paper, we study analog ECCs whose parity-check matrices
have columns of unit Euclidean norm. We first establish an upper
bound on $\Gamma_m(\mathcal C)$ of such codes $\mathcal C$ for any
$1\leq m\leq r$, where $r$ is the redundancy of $\mathcal C$.
$($By definition, $\Gamma_m(\mathcal C)=\infty$ when $m>r$.$)$ We
also give a simple decoder of such codes for correcting a single
error. We then construct a family of $[n,k=n-3]$ linear codes with
$\Gamma_2(\mathcal C)=O(n\sqrt{n})$. Compared with the known
single-error-correcting $[n,k=n-2]$ MDS linear codes, our
construction achieves a reduction in $\Gamma_2(\mathcal C)$ by a
factor of $\sqrt{n}$, at the cost of one additional redundancy. A
comparison between our construction and existing works is provided
in Table 1.

\section{Preliminaries}
For any integers $\ell\leq n$, let $[\ell:n]=\{\ell,
\ell+1,\cdots,n-1\}$ and $[n\rangle=[0:n]=\{0,1,\cdots,n-1\}$. If
$A$ is a set, then $|A|$ is the size of $A$ and $2^{A}$ is the
collection of all subsets of $A$. Let $\mathbb R$ be the set of
real numbers, $\mathbb R^+$ be the set of positive real numbers
and $\mathbb R_{\geq 0}$ be the set of non-negative real numbers.
We use $\bm 0$ to denote the all-zero vector of any length. If $H$
is an $r\times n$ matrix over $\mathbb R$, then its columns are
denoted by $\bm h_j, j\in[n\rangle$, i.e., $\bm h_j$ is the
$(j+1)$th column of $H$. Unless otherwise specified, any vector
$\bm x\in\mathbb R^n$ is written as a row vector $\bm x=(x_0,
x_1,\cdots,x_{n-1})$ and the transpose of $\bm x$ is denoted by
${\bm x}^\top$, where $x_j$ is the $(j+1)$th component of $\bm x$
for each $j\in[n\rangle$. The dot product of two vectors $\bm
x=(x_0, x_1,\cdots,x_{n-1})$ and $\bm z=(z_0,
z_1,\cdots,z_{n-1})$, denoted by $\langle\bm x,\bm z\rangle$ is
defined as $\langle\bm x,\bm z\rangle=\bm x\bm
z^\top=\sum_{i=0}^{n-1}x_iz_i~($if $\bm x, \bm z$ are column
vectors, then $\langle\bm x,\bm z\rangle=\bm x^\top\bm z)$, and
the Euclidean norm of $\bm x$ is $\|\bm x\|_2=\sqrt{\langle\bm
x,\bm x\rangle}$. For any real number $a\geq 0$, let
$$\mathsf{Supp}_a(\bm x) \triangleq\left\{j\in[n\rangle:
|x_j|>a\right\}.$$ Note that $\mathsf{Supp}(\bm
x)\triangleq\mathsf{Supp}_0(\bm x)$ is just the ordinary support
of $\bm x$. The Hamming weight of $\bm x$, denoted by
$\mathsf{w}(\bm x)$, is defined as $|\mathsf{Supp}(\bm x)|$. For
positive integers $m\leq n$, let
$$\mathcal B(n,m)\triangleq\left\{\bm e\in\mathbb
R^n: \mathsf{w}(\bm e)\leq m\right\},$$ that is, $\mathcal B(n,w)$
is the set of all vectors in $\mathbb R^n$ of Hamming weight at
most $w$. For any $\delta\in\mathbb R^+$, let
$$\mathcal Q(n,\delta)\triangleq\left\{\bm \varepsilon\in\mathbb
R^n: \mathsf{Supp}_\delta(\bm \varepsilon)=\emptyset\right\}.$$ In
other words, $\mathcal Q(n,\delta)$ is the set of all vectors
$\bm\varepsilon=(\varepsilon_0,\varepsilon_1,\cdots,
\varepsilon_{n-1})\in\mathbb R^n$ such that
$|\varepsilon_i|\leq\delta$ for all $i\in[n\rangle$.

The following concept about analog error correcting codes was
introduced in \cite{Roth20,Roth23}.

\begin{defn}\label{def-anl-dec}
Let $\mathcal C$ be an $[n,k]$ linear code over $\mathbb R$. We
say that $\mathcal C$ corrects $\tau$ errors and detects $\sigma$
additional errors with respect to the threshold pair $(\delta,
\Delta)$ if there exists a decoder $\mathcal D:\mathbb
R^n\rightarrow 2^{[n\rangle}\cup\{\text{`}\mathsf{e}\text{'}\}$
such that for every $\bm y=\bm c+\bm\varepsilon+\bm e$, where $\bm
c\in\mathcal C$, $\bm\varepsilon\in\mathcal Q(n,\delta)$ and $\bm
e\in\mathcal B(n,\tau+\sigma)$, the following two conditions are
satisfied:
\begin{enumerate}
 \item[(D1)] If $\bm e\in\mathcal B(n,\tau)$, then
 $\text{`}\mathsf{e}\text{'}\neq\mathcal D(\bm y)\subseteq \mathsf{Supp}(\bm e)$.
 \item[(D2)] If $\mathcal D(\bm y)\neq\text{`}\mathsf{e}\text{'}$, then
 $\mathsf{Supp}_\Delta(\bm e)\subseteq\mathcal D(\bm y)$.
\end{enumerate}
\end{defn}

It was shown in \cite{Roth20} that the error correction/detection
capability of a code can be characterized by its height profile,
which can be defined as follows. Let $\mathcal C$ be a linear code
over $\mathbb R$ of length $n$ and $\bm 0\neq\bm c\in\mathcal C$
such that the entries of $\bm c$ are sorted according to
descending absolute values as:
$|c_{\pi(0)}|\geq|c_{\pi(1)}|\geq\cdots|c_{\pi(n-1)}|.$ For
$m\in[n\rangle$, the $m$-height of $\bm c$ is defined as
$\mathsf{h}_m(\bm c)\triangleq\frac{|c_{\pi(0)}|}{|c_{\pi(m)}|},$
and let $\mathsf{h}_m(\bm c)=\infty$ when $m\geq n$. If $\bm c=\bm
0$, define $\mathsf{h}_m(\bm c)=0$ for every $m\geq 0$. Further,
the $m$-height of $\mathcal C$ is defined as
$$\mathsf{h}_m(\mathcal C)\triangleq\max_{\bm c\in\mathcal
C}\mathsf{h}_m(\bm c).$$

For every $\bm c\neq \bm 0$, by the definition, $\mathsf{h}_m(\bm
c)=\infty$ if and only if $m\geq \mathsf{w}(\bm c)$, where
$\mathsf{w}(\bm c)$ is the Hamming weight of $\bm c$. Therefore,
$\mathsf{h}_m(\mathcal C)=\infty$ if and only if
$m\geq\mathsf{d}(\mathcal C)$, where $\mathsf{d}(\mathcal C)$ is
the minimum Hamming distance of $\mathcal C$.

\begin{lem}\cite[Theorem 1]{Roth20}\label{Err-C-Ablt}
A linear code $\mathcal C$ can correct $\tau$ errors and detect
$\sigma$ additional errors with respect to the threshold pair
$(\delta, \Delta)$ if and only if
$2(\mathsf{h}_{2\tau+\sigma}(\mathcal
C)+1)\leq\frac{\Delta}{\delta}$, or equivalently,
$\Gamma_{2\tau+\sigma}(\mathcal C)\leq\frac{\Delta}{\delta}$,
where $\Gamma_m(\mathcal C)\triangleq 2(\mathsf{h}_m(\mathcal
C)+1)$ for each $m\in[n\rangle$.
\end{lem}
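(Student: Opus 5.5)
We establish Lemma~\ref{Err-C-Ablt} (Roth's characterization) in two directions, writing $m=2\tau+\sigma$, $h=\mathsf{h}_m(\mathcal C)$ and $\theta=\Delta/\delta$. The guiding observation: a received word is $\bm y=\bm c+\bm\varepsilon+\bm e$, and two explanations of the same $\bm y$ differ by a codeword of weight-like structure, namely $\bm c-\bm c'=\bm\varepsilon'-\bm\varepsilon+\bm e'-\bm e$. Throughout, the $m$-height controls how large the largest entry of a codeword can be, given that its $(m+1)$th largest entry is at most $2\delta$.

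\emph{Sufficiency.} Assume $2(h+1)\le\theta$. Define the decoder by a minimum-distance-like rule.
\begin{itemize}
\item Given $\bm y$, search for a codeword $\hat{\bm c}$ such that $\bm y-\hat{\bm c}$ has at most $\tau$ entries of absolute value exceeding $\delta$. If one exists, output $\mathcal D(\bm y)=\mathsf{Supp}_{\delta'}(\bm y-\hat{\bm c})$ for a suitable threshold $\delta'$, namely $\delta'=\Delta-\delta$ or $(2h+1)\delta$. Otherwise output `$\mathsf e$'.
\item For (D1), if $\mathsf w(\bm e)\le\tau$, the true $\bm c$ qualifies, so the output is not `$\mathsf e$'. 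For any qualifying $\hat{\bm c}$, the difference $\bm c-\hat{\bm c}$ has at most $2\tau\le m$ entries exceeding $2\delta$ in absolute value. Hence every entry is bounded by $2\delta h$, which keeps entries outside $\mathsf{Supp}(\bm e)$ below the threshold.
\item For (D2), with $\mathsf w(\bm e)\le\tau+\sigma$, the difference has at most $\tau+(\tau+\sigma)=m$ large positions. Its entries are therefore bounded by $2\delta h$, so any $|e_j|>\Delta\ge 2\delta(h+1)$ yields $|(\bm y-\hat{\bm c})_j|>2\delta h+\delta$, placing $j$ in the output.
\end{itemize}
The threshold must be tuned so that both inequalities close exactly at $\theta=2(h+1)$.

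\emph{Necessity.} Suppose $2(h+1)>\theta$ and take a codeword $\bm c$ attaining (or nearly attaining) height $h$. Scale it so that its $(m+1)$th largest magnitude equals $2\delta$; its largest entry is then $2\delta h>\Delta-2\delta$. Split the $m$ largest positions into sets $T_1$ of size $\tau$ and $T_2$ of size $\tau+\sigma$, with the largest entry placed in $T_2$. Split the remaining entries, each of magnitude at most $2\delta$, as $\bm\varepsilon-\bm\varepsilon'$ with both parts in $\mathcal Q(n,\delta)$. This gives a $\bm y$ that equals $\bm 0+\bm\varepsilon+\bm e$ with $\mathsf{Supp}(\bm e)\subseteq T_1$, and also $\bm c+\bm\varepsilon'+\bm e'$ with $\bm e'$ supported on $T_2$ and containing an entry of magnitude exceeding $\Delta$. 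By (D1) the decoder's output avoids `$\mathsf e$' and lies inside $T_1$, while (D2) in the second explanation forces it to contain a position of $T_2$, a contradiction. The case $h=\infty$ (when $m\ge\mathsf d(\mathcal C)$) is handled the same way, using a codeword of weight at most $m$.

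The main obstacle is the sufficiency threshold bookkeeping. A single threshold must simultaneously exclude positions outside $\mathsf{Supp}(\bm e)$ and include all positions with $|e_j|>\Delta$. On top of that, the bound on entries of $\bm c-\hat{\bm c}$ must be invoked with exactly $m$ allowed large positions, and non-strict versus strict inequalities need handling so that equality $\theta=2(h+1)$ is permitted. For that step I would take the threshold $\delta'=(2h+1)\delta$ and verify both conditions carefully.
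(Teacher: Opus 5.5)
The paper gives no proof of this lemma; it is quoted from Roth's Theorem 1. I am therefore judging your argument on its own. Both directions have a gap. The one in sufficiency is a genuine flaw in the decoder, not just bookkeeping.

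\emph{Sufficiency.} In your (D2) step you drop a term. With $\bm d=\bm c-\hat{\bm c}$ one has $(\bm y-\hat{\bm c})_j=d_j+\varepsilon_j+e_j$. Since $j\in\mathsf{Supp}(\bm e)$ is one of the at most $m$ exempt positions, $|d_j|$ can be as large as $2\delta h$. So you only get $|(\bm y-\hat{\bm c})_j|>\Delta-2\delta h-\delta\ge\delta$, not $>(2h+1)\delta$.

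Tuning $\delta'$ cannot repair this. If the decoder may use any admissible $\hat{\bm c}$ (as your ``for any qualifying $\hat{\bm c}$'' presumes), no single threshold works at $\Delta=\Gamma_m(\mathcal C)\delta$. Take the $[3,1]$ repetition code with $\tau=1$, $\sigma=0$ (so $h=\mathsf h_2=1$), $\delta=1$, $\Delta=4$.
First case: $\bm e=\bm 0$ and $\bm y=\bm\varepsilon=(1,1,-1)$. Then $\hat{\bm c}=(2,2,2)$ is admissible with residual $(-1,-1,-3)$, so (D1) forces $\delta'\ge 3$.
Second case: $\bm c=\bm 0$, $\bm\varepsilon=(-1,1,1)$, $\bm e=(4+\eta,0,0)$. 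The same $\hat{\bm c}$ is admissible with residual $(1+\eta,-1,-1)$, so (D2) forces $\delta'<1+\eta$ for every $\eta>0$.
In particular, your $\delta'=3$ fails (D2).

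What works is to intersect over all explanations instead of thresholding one. Output `$\mathsf{e}$' if no $\hat{\bm c}\in\mathcal C$ has $|\mathsf{Supp}_\delta(\bm y-\hat{\bm c})|\le\tau$. Otherwise output $\bigcap_{\hat{\bm c}}\mathsf{Supp}_\delta(\bm y-\hat{\bm c})$, taken over all such admissible $\hat{\bm c}$.
\begin{itemize}
\item (D1) is then immediate by taking $\hat{\bm c}=\bm c$.
\item For (D2), suppose $|e_j|>\Delta$ but $j\notin\mathsf{Supp}_\delta(\bm y-\hat{\bm c})$ for some admissible $\hat{\bm c}$. Then $|d_j|\ge|e_j|-2\delta>2\delta h$. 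Meanwhile $|d_i|\le 2\delta$ outside the at most $m$ positions $\mathsf{Supp}(\bm e)\cup\mathsf{Supp}_\delta(\bm y-\hat{\bm c})$. Since $\bm d\neq\bm 0$ and $\mathsf d(\mathcal C)>m$ (as $h<\infty$), the $(m+1)$th largest magnitude of $\bm d$ lies in $(0,2\delta]$. Hence $\mathsf h_m(\bm d)>h$, a contradiction.
\end{itemize}
This is exactly where $\Delta\ge 2(h+1)\delta$ enters, with equality allowed. (D2) concerns membership of $j$ in the support of every admissible explanation, not the size of one residual.

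\emph{Necessity.} Your two-explanation construction is the right idea, but you lose $2\delta$. With $\bm\varepsilon,\bm\varepsilon'$ supported only outside the $m$ largest positions, the entry of $\bm e'$ at the largest position $j^*$ is $-c_{j^*}$, of magnitude $2\delta h$. You only know $2\delta h>\Delta-2\delta$, not $2\delta h>\Delta$. As written, this proves only that $2h\le\Delta/\delta$ is necessary.

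Fix it by also setting $\varepsilon_{j^*}=-\delta s$ and $\varepsilon'_{j^*}=\delta s$, where $s$ is the sign of $c_{j^*}$. The first explanation still reads $y_{j^*}=\varepsilon_{j^*}$. Now $e'_{j^*}=\varepsilon_{j^*}-\varepsilon'_{j^*}-c_{j^*}$ has magnitude $2\delta(h+1)>\Delta$; this is where the ``$+1$'' in $\Gamma_m$ comes from. With that change your contradiction goes through: (D1) gives $\mathcal D(\bm y)\subseteq T_1$, while (D2) gives $j^*\in\mathcal D(\bm y)$ with $j^*\in T_2$.
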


By Definition \ref{def-anl-dec} and Lemma \ref{Err-C-Ablt}, the
error correction/detection capability of a code $\mathcal C$ is
represented by the ratio $\Delta/\delta$. In general, smaller
value of the ratio $\Delta/\delta$ means smaller ``gray area''
between the values of the outlying errors and the tolerable
errors. On the other hand, by Lemma \ref{Err-C-Ablt}, the minimum
value of $\Delta/\delta$ is determined by $\Gamma_{m}(\mathcal
C)$. Hence, we are interested in designing $[n,k]$ linear codes
$\mathcal C$ over $\mathbb R$ with $\Gamma_{m}(\mathcal C)$ as
small as possible for given $n$, $k$ and $m\leq n-k$. We only need
to consider $m\leq n-k$ because $\mathsf{h}_m(\mathcal C)=\infty$
for $m\geq\mathsf{d}(\mathcal C)$ and $\mathsf{d}(\mathcal C)\leq
n-k+1$ by the Singleton bound.

\section{A class of analog error correcting codes}

In this section, we consider a particular class of $[n,k]$ codes,
namely those whose parity-check matrices have columns of unit
Euclidean norm. We will give an upper bound on the $m$-height of
such codes for $1\leq m\leq n-k$, and we also present a simple
decoder for correcting a single error.

To estimate the height profile of codes, we need the following
notations and lemmas.

Let $H$ be an $r\times n$ matrix over $\mathbb R$. For the sake of
convenience, for any $A\subseteq\mathbb R^n$, we denote
$H*A=\{H{\bm a}^\top: \bm a\in A\}$. Then we let $\mathcal
S_H=H*\mathcal Q(n,1)$ and $2\mathcal S_H=\mathcal S_H+\mathcal
S_H$. Clearly, we have $2\mathcal S_H=H*\mathcal Q(n,2)$. For
$\Delta\in\mathbb R^+$, let
\begin{align*}\mathcal B_\Delta(n,m)&\triangleq\{\bm e\in\mathcal B(n,m):
\mathsf{Supp}_\Delta(\bm e)\neq\emptyset\}
\end{align*} and for any set $J\subseteq[n\rangle$, let
$$\mathcal B(n,J)\triangleq\{\bm e\in\mathbb R^n:
\mathsf{Supp}(\bm e)\subseteq J\}.$$

\begin{lem}\cite[Proposition 2]{{Wei24}}\label{lem-Hm-GM}
Suppose $\mathcal C$ is an $[n,k>0]$ linear code over $\mathbb R$
and $H$ is a parity check matrix of $\mathcal C$. Then for each
$m\in[1:\mathsf{d}(\mathcal C)]$, it holds that
$$\Gamma_{m}(\mathcal C)=\min \big\{\Delta\in\mathbb R^+:
H\bm e^\top\notin 2\mathcal S_H, \!~\forall\!~\bm e\in\mathcal
B_\Delta(n,m)\big\}.$$
\end{lem}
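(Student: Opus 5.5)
The plan is to unwind the condition $H\bm e^\top\in 2\mathcal S_H$ into a statement about codewords, and then prove two inequalities. First I would note that $H\bm e^\top\in 2\mathcal S_H$ holds if and only if there is $\bm\varepsilon\in\mathcal Q(n,2)$ with $H(\bm e-\bm\varepsilon)^\top=\bm 0$. This is because $2\mathcal S_H=H*\mathcal Q(n,2)$. Equivalently, $\bm c\triangleq\bm e-\bm\varepsilon$ is a codeword of $\mathcal C$. Write $h=\mathsf h_m(\mathcal C)$. The claim to prove is that the set
$$\{\Delta\in\mathbb R^+: H\bm e^\top\notin 2\mathcal S_H,\ \forall\,\bm e\in\mathcal B_\Delta(n,m)\}$$
is exactly $[2(h+1),\infty)$. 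When $h=\infty$, which can only happen for $m=\mathsf d(\mathcal C)$, this set should be empty, and $\min\emptyset=\infty$ by convention.

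\emph{Upper bound: every $\Delta\geq 2(h+1)$ lies in the set.} Take $\bm e\in\mathcal B(n,m)$ and suppose $H\bm e^\top\in 2\mathcal S_H$. Then $\bm c=\bm e-\bm\varepsilon\in\mathcal C$ with $\|\bm\varepsilon\|_\infty\leq 2$. Off $\mathsf{Supp}(\bm e)$, a set of at least $n-m$ positions, we have $|c_i|=|\varepsilon_i|\leq 2$. Hence the $(m+1)$th largest absolute entry of $\bm c$ satisfies $|c_{\pi(m)}|\leq 2$. Consequently, for every $j$,
$$|e_j|\leq |c_j|+2\leq \mathsf h_m(\bm c)\,|c_{\pi(m)}|+2\leq 2h+2.$$
The case $\bm c=\bm 0$ is trivial, since then $|e_j|\leq 2$. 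So $\mathsf{Supp}_\Delta(\bm e)=\emptyset$ whenever $\Delta\geq 2(h+1)$, i.e.\ $\bm e\notin\mathcal B_\Delta(n,m)$. This proves the inclusion.

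\emph{Lower bound: no $\Delta<2(h+1)$ lies in the set.} Pick a codeword $\bm c$ attaining $h$; the maximum is attained by scaling invariance and compactness of the unit sphere. If $h<\infty$, rescale $\bm c$ so that $|c_{\pi(m)}|=2$. Define $\bm e$ by
$$e_{\pi(i)}=c_{\pi(i)}+2\,\mathrm{sgn}(c_{\pi(i)})\quad (i<m),$$
and $e_j=0$ elsewhere. Then $\bm e\in\mathcal B(n,m)$. Moreover, $\bm\varepsilon=\bm e-\bm c$ has entries $\pm2$ on the top $m$ positions and $-c_j$ on the rest, all of absolute value at most $2$. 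Hence $H\bm e^\top=H\bm\varepsilon^\top\in 2\mathcal S_H$. Yet $|e_{\pi(0)}|=2h+2>\Delta$, so $\bm e\in\mathcal B_\Delta(n,m)$ violates the condition. If $h=\infty$, the same construction with an arbitrarily large scaling of a codeword of weight $m$ shows that no finite $\Delta$ works.

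The only delicate points are the conventions. One is that the minimum is attained because the defining condition uses the strict inequality $|e_j|>\Delta$. The other is the degenerate cases $\bm c=\bm 0$ and $h=\infty$. I expect the main step to be the lower-bound construction, specifically the choice of adding $2\,\mathrm{sgn}(c_j)$ on the top positions. This choice gains the extra $+2$, so the bound is exactly $2(h+1)$ rather than $2h$.
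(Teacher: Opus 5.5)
The paper does not prove this lemma itself; it is quoted from \cite{Wei24} and used as a black box, so there is no in-paper argument to compare against. Judged on its own, your proof is correct and complete. You rewrite $H\bm e^\top\in 2\mathcal S_H$ as $\bm c=\bm e-\bm\varepsilon\in\mathcal C$ with $\bm\varepsilon\in\mathcal Q(n,2)$. Your pigeonhole bound $|c_{\pi(m)}|\le 2$ then gives $\max_j|e_j|\le 2(\mathsf h_m(\mathcal C)+1)$. Your extremal construction (scale a codeword to $|c_{\pi(m)}|=2$, then push its $m$ largest-magnitude entries away from $0$ by $2$) shows this bound is sharp. So the admissible set of $\Delta$ is exactly $[\Gamma_m(\mathcal C),\infty)$, and the strict inequality in $\mathsf{Supp}_\Delta$ is what makes the minimum attained.

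A few small remarks follow.

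\begin{itemize}
\item In the paper's notation $[1:\mathsf d(\mathcal C)]=\{1,\ldots,\mathsf d(\mathcal C)-1\}$, so $m<\mathsf d(\mathcal C)$ and $\mathsf h_m(\mathcal C)<\infty$ throughout. Your $h=\infty$ case therefore lies outside the statement; you handle it consistently, so this is harmless.
\item You should state explicitly that $m<\mathsf d(\mathcal C)$ forces $c_{\pi(m)}\neq 0$ for every nonzero codeword. You use this when writing $|c_{\pi(0)}|=\mathsf h_m(\bm c)\,|c_{\pi(m)}|$. It is also what makes $\mathsf h_m$ continuous on the unit sphere of $\mathcal C$ in your compactness argument.
\item That compactness step can be dropped. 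Applying your construction to any codeword $\bm c$ with $2(\mathsf h_m(\bm c)+1)>\Delta$ already excludes $\Delta$, so only the supremum matters.
\item The second half of your argument (every admissible $\Delta$ satisfies $\Delta\ge\Gamma_m(\mathcal C)$) is exactly the direction the paper invokes in the proof of Theorem~\ref{thm-Con-Frmk}.
\end{itemize}
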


\begin{lem}\label{lem-Coef-Lth}
Suppose $\bm\alpha,\bm\beta\in\mathbb R^r\backslash\{\bm 0\}$ and
$0\leq\rho<1$ such that $\|\bm \alpha\|_2=1$ and
$\frac{|\langle\bm \alpha,\bm \beta\rangle|}{\|\beta\|_2}\leq
\rho$. Then for any $\bm u=a\bm \alpha+b\bm \beta$ with
$a,b\in\mathbb R$, we have $|a|\leq \frac{\|\bm
u\|_2}{\sqrt{1-\rho^2}}$.
\end{lem}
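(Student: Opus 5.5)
The plan is to decompose $\bm\beta$ into its component along $\bm\alpha$ and a component orthogonal to $\bm\alpha$, and then bound $|a|$ by comparing $\|\bm u\|_2^2$ with the squared length of the part of $\bm u$ orthogonal to $\bm\beta$.

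Write $\bm\beta=\langle\bm\alpha,\bm\beta\rangle\bm\alpha+\bm\beta^\perp$ with $\langle\bm\alpha,\bm\beta^\perp\rangle=0$; this uses $\|\bm\alpha\|_2=1$. Set $t=\langle\bm\alpha,\bm\beta\rangle$, so $\|\bm\beta\|_2^2=t^2+\|\bm\beta^\perp\|_2^2$. The hypothesis $|t|\leq\rho\|\bm\beta\|_2$ then gives
$$\|\bm\beta^\perp\|_2^2\geq(1-\rho^2)\|\bm\beta\|_2^2.$$

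A cleaner route is to minimize over $b$ directly. For fixed $a$, the quantity $\min_b\|a\bm\alpha+b\bm\beta\|_2$ is the distance from $a\bm\alpha$ to the line spanned by $\bm\beta$. This distance equals $|a|\sqrt{1-\cos^2\theta}$, where $\cos\theta=\frac{|\langle\bm\alpha,\bm\beta\rangle|}{\|\bm\beta\|_2}\leq\rho$. More explicitly, the projection of $\bm\alpha$ onto $\bm\beta$ has squared length $t^2/\|\bm\beta\|_2^2\leq\rho^2$. Hence the component of $\bm\alpha$ orthogonal to $\bm\beta$ has squared norm at least $1-\rho^2$.

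Since $b\bm\beta$ has no component orthogonal to $\bm\beta$, the orthogonal component of $\bm u$ is $a$ times that of $\bm\alpha$. It follows that
$$\|\bm u\|_2^2\geq a^2(1-\rho^2).$$
Because $\rho<1$, we can divide by $1-\rho^2$ and take square roots, which gives the claimed bound $|a|\leq\|\bm u\|_2/\sqrt{1-\rho^2}$.

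There is no real obstacle here. The only care needed is to use the projection orthogonal to $\bm\beta$ rather than the one orthogonal to $\bm\alpha$, so that $b$ drops out; $\bm\beta\neq\bm 0$ is what makes that projection well defined.
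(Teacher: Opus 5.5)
Your proof is correct and is the geometric form of the paper's argument: the paper completes the square in $b$, writing $\|\bm u\|_2^2=\big(\langle\bm u,\bm\beta\rangle/\|\bm\beta\|_2\big)^2+a^2\big(1-\langle\bm\alpha,\bm\beta\rangle^2/\|\bm\beta\|_2^2\big)$, and these two terms are exactly the squared lengths of the components of $\bm u$ along $\bm\beta$ and orthogonal to $\bm\beta$ in your projection. Your opening decomposition of $\bm\beta$ relative to $\bm\alpha$ is never used and can be dropped.
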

\begin{proof}
Since $\bm u=a\bm\alpha+b\bm\beta$ and $\|\bm \alpha\|_2=1$, then
\begin{align*}
\|\bm u\|_2^2&=\langle a\bm\alpha+b\bm\beta,a\bm\alpha+b\bm\beta\rangle\\
&=a^2+2ab\langle\bm\alpha,\bm\beta\rangle+b^2\|\bm\beta\|_2^2\\
&=\left(b\|\bm\beta\|_2+\frac{a\langle\bm\alpha,\bm\beta\rangle}{\|\bm\beta\|_2}\right)^2
+a^2\left(1-\left(\frac{\langle\bm\alpha,\bm\beta\rangle}{\|\bm\beta\|_2}\right)^2\right)\\
&\geq a^2(1-\rho^2)
\end{align*}
where the inequality holds because $0\leq\rho<1$ and
$\frac{|\langle\bm \alpha,\bm \beta\rangle|}{\|\beta\|_2}\leq
\rho$. Hence, we have $|a|\leq \frac{\|\bm
u\|_2}{\sqrt{1-\rho^2}}$.
\end{proof}

The following theorem gives an upper bound on the height profile
of codes whose parity check matrices have columns of unit
Euclidean norm.
\begin{thm}\label{thm-Con-Frmk}
Suppose $H$ is an $r\times n$ matrix over $\mathbb R$,
$m\in[1:r+1]$ and $0\leq \rho<1$ such that:
\begin{itemize} \item[(1)] Each column
of $H$ has unit Euclidean norm, i.e., $\|\bm h_j\|_2=1$ for each
$j\in[n\rangle$; \item[(2)] For any $j\in[n\rangle$, any
$J'\subseteq[n\rangle\backslash\{j\}$ of size $|J'|=m-1$ and any
$\bm u\in H*\mathcal B(n,J')$, it holds that $\frac{|\langle\bm
h_j,\bm u\rangle|}{\|\bm u\|_2}\leq \rho$.
\end{itemize}
Let $\mathcal C$ be the code that has $H$ as a parity check
matrix. Then we have $\Gamma_{m}(\mathcal C)\leq
\frac{2n}{\sqrt{1-\rho^2}}$.
\end{thm}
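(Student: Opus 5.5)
We will use Lemma \ref{lem-Hm-GM}: it suffices to show that for $\Delta=\frac{2n}{\sqrt{1-\rho^2}}$ (or any strictly larger value, then pass to the minimum), every $\bm e\in\mathcal B_\Delta(n,m)$ satisfies $H\bm e^\top\notin 2\mathcal S_H$. We argue by contradiction: suppose $\bm e\in\mathcal B(n,m)$ has some coordinate $j\in\mathsf{Supp}_\Delta(\bm e)$, i.e.\ $|e_j|>\Delta$, and yet $H\bm e^\top=H\bm\varepsilon^\top$ for some $\bm\varepsilon\in\mathcal Q(n,2)$.

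The key step is the decomposition of $H\bm e^\top$. Let $J=\mathsf{Supp}(\bm e)$, and set $J'=J\setminus\{j\}$, enlarged arbitrarily to size exactly $m-1$ inside $[n\rangle\setminus\{j\}$ if needed (possible since $m-1\leq r\leq n-1$; if $m\le r$ this is clear, and the case $m=r+1$ should be checked, but enlarging $J'$ only enlarges $H*\mathcal B(n,J')$, so hypothesis (2) still applies). Write $H\bm e^\top=e_j\bm h_j+\bm u$ where $\bm u=\sum_{i\in J'}e_i\bm h_i\in H*\mathcal B(n,J')$. If $\bm u=\bm 0$, then $\|H\bm e^\top\|_2=|e_j|$ directly. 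Otherwise apply Lemma \ref{lem-Coef-Lth} with $\bm\alpha=\bm h_j$ (unit norm by (1)), $\bm\beta=\bm u$, $a=e_j$, $b=1$; hypothesis (2) gives exactly $|\langle\bm h_j,\bm u\rangle|/\|\bm u\|_2\leq\rho$. Hence in either case $|e_j|\leq\|H\bm e^\top\|_2/\sqrt{1-\rho^2}$.

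Finally bound $\|H\bm e^\top\|_2=\|H\bm\varepsilon^\top\|_2=\|\sum_i\varepsilon_i\bm h_i\|_2\leq\sum_i|\varepsilon_i|\,\|\bm h_i\|_2\leq 2n$ by the triangle inequality and (1). Thus $|e_j|\leq 2n/\sqrt{1-\rho^2}=\Delta$, contradicting $|e_j|>\Delta$. So $H\bm e^\top\notin2\mathcal S_H$ for all $\bm e\in\mathcal B_\Delta(n,m)$, and Lemma \ref{lem-Hm-GM} yields $\Gamma_m(\mathcal C)\leq\Delta$.

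The only delicate point is the applicability of Lemma \ref{lem-Hm-GM}, which requires $m\in[1:\mathsf d(\mathcal C)]$. I would show that hypothesis (2) forces $\mathsf d(\mathcal C)>m$: if a nonzero codeword $\bm c$ had weight at most $m$, pick $j$ in its support; then $0=H\bm c^\top=c_j\bm h_j+\bm u$ and the same Lemma \ref{lem-Coef-Lth} argument gives $|c_j|\leq 0$, a contradiction (when $\bm u=\bm 0$ one gets $c_j\bm h_j=\bm 0$, impossible since $\|\bm h_j\|_2=1$). Thus $m<\mathsf d(\mathcal C)$, the lemma applies, and the bound follows.
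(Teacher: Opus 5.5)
Your proof is correct and is essentially the paper's argument. Both use Lemma~\ref{lem-Hm-GM}, the bound $\|\bm u\|_2\leq 2n$ for $\bm u\in 2\mathcal S_H$, Lemma~\ref{lem-Coef-Lth} with $\bm\alpha=\bm h_j$, and the fact that hypothesis (2) forces $\mathsf{d}(\mathcal C)>m$.

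The only differences are minor. You apply Lemma~\ref{lem-Coef-Lth} directly to the entry $e_j$ and handle $\bm u=\bm 0$ explicitly, whereas the paper bounds the coefficients of any short vector in $H*\mathcal B(n,J)$ and then uses linear independence of the $m$ columns to reach the same contradiction. Also, $[1:r+1]=\{1,\ldots,r\}$ in the paper's notation, so the case $m=r+1$ you hedge on does not arise.
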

\begin{proof}
By definition, the set $H*\mathcal B(n,J')$ in condition (2)
actually coincides with the subspace of $\mathbb R^r$ spanned by
the vectors $\bm h_{j'}, j'\in J'$. As $0\leq \rho<1$, condition
(2) implies that any $m$ columns of $H$ are linearly independent,
and hence we have $\mathsf{d}(\mathcal C)>m$. Moreover, for any
$\bm u\in 2\mathcal S_H$, by the definition of $2\mathcal S_H$, we
have $\|\bm u\|_2\leq \sum_{\ell\in[n\rangle}2\|\bm
h_\ell\|_2=2n$.

Let $\Delta=\frac{2n}{\sqrt{1-\rho^2}}$. Further, for any
$J=\{j_1,j_2,\cdots,j_m\}\subseteq[n\rangle$, let $$\mathcal
B_\Delta(n,J)\triangleq\left\{\bm e\in\mathbb R^n:
\mathsf{Supp}(\bm e)\subseteq
J~\text{and}~\mathsf{Supp}_\Delta(\bm e)\neq\emptyset\right\}$$
and $$\overline{\mathcal B}_\Delta(n,J)\triangleq\left\{\bm
e\in\mathbb R^n: \mathsf{Supp}(\bm e)\subseteq
J~\text{and}~\mathsf{Supp}_\Delta(\bm e)=\emptyset\right\}.$$ We
first prove $\bm u\in H*\overline{\mathcal B}_\Delta(n,J)$ for any
$\bm u\in H*\mathcal B(n,J)$ with $\|\bm u\|_2\leq 2n$. In fact,
since $\bm u\in H*\mathcal B(n,J)$, we have $\bm u=a_{j_1}\bm
h_{j_1}+a_{j_2}\bm h_{j_2}+\cdots+a_{j_m}\bm h_{j_m}$ for some
$(a_{j_1},a_{j_2},\cdots,a_{j_m})\in\mathbb R^m$. For each
$\ell\in[1:m+1]$, by Conditions (1), (2) and by Lemma
\ref{lem-Coef-Lth} $($taking $\bm\alpha=\bm h_{j_\ell}$ and
$\bm\beta=\sum_{i\in[1:m+1]\backslash\{\ell\}}a_{j_i}\bm
h_{j_i})$, we can obtain $|a_{j_\ell}|\leq \frac{\|\bm
u\|_2}{\sqrt{1-\rho^2}}\leq \frac{2n}{\sqrt{1-\rho^2}}=\Delta$.
Therefore, we have $\bm u=H\bm a^\top$, where $\bm
a=(a_0,a_1,\cdots,a_{n-1})$ such that $a_j=0$ for
$j\in[n\rangle\backslash J$ and $|a_{j_\ell}|\leq\Delta$ for $j\in
J$. This implies that $\bm u\in H*\overline{\mathcal
B}_\Delta(n,J)$.

Now, we can prove $\Gamma_{m}(\mathcal C)\leq
\frac{2n}{\sqrt{1-\rho^2}}$. By Lemma \ref{lem-Hm-GM}, it suffices
to prove that $H\bm e^\top\notin 2\mathcal S_H$ for all $\bm
e\in\mathbb R^n$ such that $|\mathsf{Supp}(\bm e)|\leq
m~\text{and}~\mathsf{Supp}_\Delta(\bm e)\neq\emptyset$, where
$\Delta=\frac{2n}{\sqrt{1-\rho^2}}$. This can be proved by
contradiction. Let $J=\{j_1,j_2,\cdots,j_m\}\subseteq[n\rangle$ be
such that $\mathsf{Supp}(\bm e)\subseteq J$. Then $\bm e\in
\mathcal B_\Delta(n,J)$ and so $H\bm e^\top\in H*\mathcal
B_\Delta(n,J)\subseteq H*\mathcal B(n,J)$, where the inclusion
relationship comes from the simple fact that $\mathcal
B_\Delta(n,J)\subseteq \mathcal B(n,J)$. Suppose $\bm u=H\bm
e^\top\in 2\mathcal S_H$. Then $\|\bm u\|_2\leq 2n$ and by the
proven result we have $\bm u\in H*\overline{\mathcal
B}_\Delta(n,J)$. Therefore, \begin{align}\label{u-inCap}\bm u\in
\big(H*\mathcal B_\Delta(n,J)\big)\cap\big(H*\overline{\mathcal
B}_\Delta(n,J)\big).\end{align} On the other hand, by the
definition of $\mathcal B_\Delta(n,J)$ and $\overline{\mathcal
B}_\Delta(n,J)$, we can easily see that $\mathcal
B_\Delta(n,J)\cap\overline{\mathcal B}_\Delta(n,J)=\emptyset$.
Moreover, noticing that $\bm h_{j_1},\bm h_{j_2},\cdots,\bm
h_{j_m}$ are linearly independent, so $\big(H*\mathcal
B_\Delta(n,J)\big)\cap\big(H*\overline{\mathcal
B}_\Delta(n,J)\big)=\emptyset$, which contradicts to
\eqref{u-inCap}. Thus, we must have $\bm u=H\bm e^\top\notin
2\mathcal S_H$, which completes the proof.
\end{proof}

\begin{rem}
We can define
$$\mathsf{Coh}(\bm h_j,H_{J'})\triangleq
\min_{\bm u\in H*\mathcal B(n,J')}\frac{|\langle\bm h_j,\bm
u\rangle|}{\|\bm u\|_2}.$$ Geometrically, $\mathsf{Coh}(\bm
h_j,H_{J'})$ is the cosine of the principal angle between $\bm
h_j$ and the subspace spanned by $H_{J'}=\{\bm h_{j'}, j'\in
J'\}$. According to Theorem \ref{thm-Con-Frmk}, small value of
$\Gamma_{m}(\mathcal C)$ can be obtained by minimizing
$\rho=\max\{\mathsf{Coh}(\bm h_j,H_{J'}): j\in[n\rangle$,
$J'\subseteq[n\rangle\backslash\{j\}, |J'|=m-1\}$. In the special
case $m=2$, this reduces to minimizing $\rho=\max_{j\neq
j'\in[n\rangle}\{|\langle\bm h_j,\bm h_{j'}\rangle|\}$, which is
the classical spherical code (or coherence minimization) problem
studied in coding theory, discrete geometry, and harmonic
analysis, with applications to communication systems, frame
theory, and compressed sensing $($e.g., see \cite{Strohmer03,
Fickus18}$)$.
\end{rem}

In the rest of this section, we propose a decoder, denoted by
$\mathcal D_1$, that corrects a single error with respect to the
threshold pair $\big(\delta=1, \Delta=2n/\sqrt{1-\rho}\big)$ for
any code which has a parity check matrix $H$ as in Theorem
\ref{thm-Con-Frmk}. Specifically, we suppose $H$ is an $r\times n$
matrix over $\mathbb R$ such that $\|\bm h_j\|_2=1$ for each
column $\bm h_j$ of $H$ and $\rho=\max_{j\neq
j'\in[n\rangle}\{|\langle\bm h_j,\bm h_{j'}\rangle|\}<1$. Let
$\mathcal C$ be the $[n,k=n-r]$ linear code over $\mathbb R$ that
has $H$ as a parity check matrix. Then the decoder $\mathcal D_1$
of $\mathcal C$ can be defined as follows.

\textbf{The Decoder $\bm{\mathcal D_1}$}: Let
$\theta=\sqrt{\frac{1+\rho}{1-\rho}}n$ and $\Delta=\theta+n$. For
any $\bm y\in\mathbb R^n$, compute $\bm s=H\bm y^\top$ and compute
$\xi_j=\langle\bm s,\bm h_j\rangle=\bm h_j^\top\bm s$ for all
$j\in[n\rangle$. If $|\xi_j|\leq\theta$ for all $j\in[n\rangle$,
let $\mathcal D_1(\bm y)=\emptyset$; otherwise, choose a
$j_0\in[n\rangle$ such that $|\xi_{j_0}|\geq|\xi_{j}|$ for all
$j\in[n\rangle\backslash\{j_0\}$ and let $\mathcal D_1(\bm
y)=\{j_0\}$.

Before proving the correctness of the decoder $\mathcal D_1$, we
need the following two lemmas.

\begin{lem}\label{lem-vtcl-cdot}
Suppose $\bm\alpha,\bm\beta\in\mathbb R^r\backslash\{\bm 0\}$ such
that $\|\bm\alpha\|_2=\|\bm\beta\|_2=1$ and
$|\langle\bm\alpha,\bm\beta\rangle|=\bar{\rho}<1$. If
$\bm\gamma=a\bm\alpha+b\bm\beta$, where $a,b\in\mathbb R$, and
$|\langle\bm\alpha,\bm\gamma\rangle|=0$. Then
$|\langle\bm\beta,\bm\gamma\rangle|=\sqrt{1-\bar{\rho}^2}\|\bm
\gamma\|_2$.
\end{lem}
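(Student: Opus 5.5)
We argue by direct computation in the two-dimensional span of $\bm\alpha$ and $\bm\beta$. Write $c=\langle\bm\alpha,\bm\beta\rangle$, so $|c|=\bar\rho<1$. Since $\bar\rho<1$, the vectors $\bm\alpha$ and $\bm\beta$ are linearly independent and the coefficients $a,b$ of $\bm\gamma$ are well defined.

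The first step uses the orthogonality hypothesis:
$$0=\langle\bm\alpha,\bm\gamma\rangle=a+bc,$$
which gives $a=-bc$. Hence $\bm\gamma=b(\bm\beta-c\bm\alpha)$. In other words, $\bm\gamma$ is a multiple of the component of $\bm\beta$ orthogonal to $\bm\alpha$.

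The second step computes both sides of the claimed identity in terms of $b$ and $c$. For the inner product,
$$\langle\bm\beta,\bm\gamma\rangle=b\big(\|\bm\beta\|_2^2-c\langle\bm\beta,\bm\alpha\rangle\big)=b(1-c^2).$$
For the norm,
$$\|\bm\gamma\|_2^2=b^2\big(\|\bm\beta\|_2^2-2c\langle\bm\alpha,\bm\beta\rangle+c^2\|\bm\alpha\|_2^2\big)=b^2(1-2c^2+c^2)=b^2(1-c^2),$$
so $\|\bm\gamma\|_2=|b|\sqrt{1-c^2}$.

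Combining the two, and using $c^2=\bar\rho^2$,
$$|\langle\bm\beta,\bm\gamma\rangle|=|b|(1-\bar\rho^2)=\sqrt{1-\bar\rho^2}\,\big(|b|\sqrt{1-\bar\rho^2}\big)=\sqrt{1-\bar\rho^2}\,\|\bm\gamma\|_2.$$
This is the desired equality. It also holds trivially when $b=0$, since then $\bm\gamma=\bm 0$ and both sides vanish.

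There is no genuine obstacle here. The only points needing care are the expansion of $\|\bm\beta-c\bm\alpha\|_2^2$ using the unit norms, and the degenerate case $\bm\gamma=\bm 0$.
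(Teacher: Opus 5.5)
Your proof is correct and follows the same route as the paper. Both arguments use $\langle\bm\alpha,\bm\gamma\rangle=0$ to tie the coefficients together, then compute $\langle\bm\beta,\bm\gamma\rangle$ and $\|\bm\gamma\|_2^2$ directly and compare them; your substitution $a=-bc$ with $c=\langle\bm\alpha,\bm\beta\rangle$ is slightly cleaner than the paper's $\bar\rho=-a/b$, since it never divides by $b$ and does not tacitly assume $\langle\bm\alpha,\bm\beta\rangle=+\bar\rho$.
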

\begin{proof}
From the assumption $\bm\gamma=a\bm\alpha+b\bm\beta$, we can
obtain $\langle\bm\alpha,\bm\gamma\rangle=\langle\bm\alpha,
a\bm\alpha+b\bm\beta\rangle=a\langle\bm\alpha,\bm\alpha\rangle+b\langle\bm\alpha,
\bm\beta\rangle=a+b\bar{\rho}$. Since
$|\langle\bm\alpha,\bm\gamma\rangle|=0$, then we have
$a+b\bar{\rho}=0$, which implies
$$\bar{\rho}=-\frac{a}{b}.$$ Note that
$\|\gamma\|_2^2=\langle\bm\gamma,\bm\gamma\rangle=\langle
a\bm\alpha+b\bm\beta,
a\bm\alpha+b\bm\beta\rangle=a^2\langle\bm\alpha,\bm\alpha\rangle+2ab\langle\bm\alpha,
\bm\beta\rangle+b^2\langle\bm\beta,\bm\beta\rangle=a^2+b^2+2ab\bar{\rho}$,
where the last equality comes from the assumption that
$\|\bm\alpha\|_2=\|\bm\beta\|_2=1$. Then we can obtain
\begin{align*}(1-\bar{\rho}^2)\|\gamma\|_2^2&=(1-\bar{\rho}^2)(a^2+b^2+2ab\bar{\rho})\\
&=(1-\frac{a^2}{b^2})(a^2+b^2-2ab\frac{a}{b})\\&=\left(\frac{b^2-a^2}{b}\right)^2\end{align*}
where the second equality holds because $\bar{\rho}=-\frac{a}{b}$.
On the other hand, by the assumption,
$\langle\bm\beta,\bm\gamma\rangle^2=\langle\bm\beta,
a\bm\alpha+b\bm\beta\rangle^2=(a\langle\bm\beta,\bm\alpha\rangle+b\langle\bm\beta,
\bm\beta\rangle)^2=(a\bar{\rho}+b)^2=(-\frac{a^2}{b}+b)^2=(\frac{b^2-a^2}{b})^2$.
Therefore, we have
$\langle\bm\beta,\bm\gamma\rangle^2=(1-\bar{\rho}^2)\|\gamma\|_2^2$,
and so
$|\langle\bm\beta,\bm\gamma\rangle|=\sqrt{1-\bar{\rho}^2}\|\bm
\gamma\|_2$, which completes the proof.
\end{proof}

\begin{lem}\label{lem-exp-dvn}
Suppose $\bm y=\bm c+\bm\varepsilon+\bm e$ such that $\bm
c\in\mathcal C$, $\bm\varepsilon\in\mathcal Q(n,1)$ and
$\mathsf{Supp}(\bm e)\subseteq\{j_0\}$. Let $\theta$, $\Delta$ and
$\xi_{j}$, $j\in[n\rangle$, be defined as in the decoder
$\bm{\mathcal D_1}$. The following statements hold.
\begin{itemize}
 \item[1)] If $|e_{j_0}|>\Delta$, then $|\xi_{j_0}|>\theta$.
 \item[2)] If $|\xi_{j_0}|>\theta$, then
 $|\xi_{j_0}|>|\xi_{j}|$ for all $j\in[n\rangle\backslash\{j_0\}$.
 \item[3)] If $|\xi_{j_0}|\leq\theta$, then $|\xi_{j}|\leq \theta$
 for all $j\in[n\rangle\backslash\{j_0\}$.
\end{itemize}
\end{lem}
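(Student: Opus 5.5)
The plan is to compute the syndrome explicitly and then reduce everything to the two-dimensional plane spanned by $\bm h_{j_0}$ and $\bm h_j$. Since $H\bm c^\top=\bm 0$, we have $\bm s=H\bm y^\top=e_{j_0}\bm h_{j_0}+\bm w$, where $\bm w=H\bm\varepsilon^\top$. Because $|\varepsilon_i|\leq 1$ and each column has unit norm, $\|\bm w\|_2\leq\sum_i\|\bm h_i\|_2=n$. Hence $\xi_j=e_{j_0}\langle\bm h_j,\bm h_{j_0}\rangle+\langle\bm h_j,\bm w\rangle$, and in particular $\xi_{j_0}=e_{j_0}+\langle\bm h_{j_0},\bm w\rangle$.

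\textbf{Statement 1).} By Cauchy--Schwarz, $|\xi_{j_0}|\geq|e_{j_0}|-n>\Delta-n=\theta$.

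\textbf{Setup for 2) and 3).} The crude bound $|\xi_j|\leq\rho|e_{j_0}|+n$ is not strong enough, so we use a geometric decomposition. Fix $j\neq j_0$ and put $\bar\rho=|\langle\bm h_{j_0},\bm h_j\rangle|\leq\rho<1$. Let $P$ be the plane spanned by $\bm h_{j_0}$ and $\bm h_j$, and let $\bm p$ be the orthogonal projection of $\bm s$ onto $P$. Both $\xi_{j_0}$ and $\xi_j$ depend only on $\bm p$. Decompose
$$\bm p=t\bm h_{j_0}+\bm\gamma,\qquad \bm\gamma\in P,\ \langle\bm h_{j_0},\bm\gamma\rangle=0,$$
so that $t=\xi_{j_0}$.

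Two facts about $\bm\gamma$ are needed:
\begin{itemize}
\item[(a)] $\|\bm\gamma\|_2\leq n$. Since $e_{j_0}\bm h_{j_0}\in P$, we have $\bm p=e_{j_0}\bm h_{j_0}+\mathrm{proj}_P(\bm w)$. The vector $\bm\gamma$ is the component of $\mathrm{proj}_P(\bm w)$ orthogonal to $\bm h_{j_0}$, so $\|\bm\gamma\|_2\leq\|\bm w\|_2\leq n$.
\item[(b)] $|\langle\bm h_j,\bm\gamma\rangle|=\sqrt{1-\bar\rho^2}\,\|\bm\gamma\|_2$. This follows from Lemma \ref{lem-vtcl-cdot} applied with $\bm\alpha=\bm h_{j_0}$, $\bm\beta=\bm h_j$.
\end{itemize}
Combining (a) and (b) gives the key inequality
$$|\xi_j|=\big|t\langle\bm h_j,\bm h_{j_0}\rangle+\langle\bm h_j,\bm\gamma\rangle\big|\leq\bar\rho|t|+\sqrt{1-\bar\rho^2}\,n.$$

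\textbf{Statement 2).} Suppose $|t|>\theta$. Since $\bar\rho\leq\rho$ and $x\mapsto\frac{1+x}{1-x}$ is increasing, we get $\theta\geq\sqrt{\frac{1+\bar\rho}{1-\bar\rho}}\,n$. Therefore $(1-\bar\rho)|t|>\sqrt{1-\bar\rho^2}\,n$, which rearranges to $|t|>\bar\rho|t|+\sqrt{1-\bar\rho^2}\,n\geq|\xi_j|$.

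\textbf{Statement 3).} Suppose $|t|\leq\theta$. Then $|\xi_j|\leq\bar\rho\theta+\sqrt{1-\bar\rho^2}\,n$. This is at most $\theta$ because $(1-\bar\rho)\theta\geq\sqrt{1-\bar\rho^2}\,n$, which is the same monotonicity fact used in 2).

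\textbf{Main obstacle.} The crude bound cannot prove 2), so the essential step is the orthogonal decomposition inside $P$ together with Lemma \ref{lem-vtcl-cdot}. The point that needs care is that the perpendicular component $\bm\gamma$ is controlled by $n$ regardless of the size of $e_{j_0}$. This holds because the error term $e_{j_0}\bm h_{j_0}$ lies entirely along $\bm h_{j_0}$, so only the noise $\bm w$ contributes to $\bm\gamma$.
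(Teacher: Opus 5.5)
Your proof is correct and follows the paper's argument. Your $\bm\gamma$, the projection of $\bm s$ onto $\mathrm{span}\{\bm h_{j_0},\bm h_j\}$ with its $\bm h_{j_0}$-component removed, is exactly the paper's $\bm u'$, the projection of $\bm u=\bm s-\xi_{j_0}\bm h_{j_0}$ onto that plane; from there you use Lemma \ref{lem-vtcl-cdot} and the monotonicity of $\frac{1+x}{1-x}$ to obtain Claim 1 and statements 1)--3), as the paper does, and swapping the order of the two projections only makes the bound $\|\bm\gamma\|_2\le\|\bm w\|_2\le n$ a little more direct.
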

\begin{proof}
Since $\bm c\in\mathcal C$, we have $H\bm c^\top=0$. Then by the
definition of $\bm s$ and by the assumption that
$\mathsf{Supp}(\bm e)\subseteq\{j_0\}$, we can obtain $\bm s=H\bm
y^\top=H(\bm c^\top+\bm\varepsilon^\top+\bm
e^\top)=H\bm\varepsilon^\top+H\bm e^\top=e_{j_0}\bm
h_{j_0}+\sum_{\ell=0}^{n-1}\varepsilon_\ell\bm h_\ell$. Therefore,
\begin{align}\label{eq-s-cmpt} \bm s=e_{j_0}\bm
h_{j_0}+\sum_{\ell=0}^{n-1}\varepsilon_\ell\bm h_\ell
\end{align} and
for each $j\in[n\rangle$, we have
\begin{align}\label{eq-xi-cmpt}
\xi_{j}=\langle\bm s, \bm h_j\rangle=e_{j_0}\langle\bm h_j,\bm
h_{j_0}\rangle+\sum_{\ell=0}^{n-1}\varepsilon_\ell\langle\bm
h_j,\bm h_\ell\rangle.
\end{align}

We first prove the following claim.

\emph{Claim 1}: $|\xi_{j}|\leq
|\xi_{j_0}\rho_{j,j_0}|+\sqrt{1-\rho_{j,j_0}^2}n$ for all
$j\in[n\rangle\backslash\{j_0\}$, where $\rho_{j,j_0}=\langle\bm
h_j,\bm h_{j_0}\rangle$.

To prove Claim 1, the key is to write $\bm s=\xi_{j_0}\bm
h_{j_0}+\bm u'+\bm u''$ such that for each
$j\in[n]\backslash\{j_0\}$, $\langle\bm u',\bm h_j\rangle\leq
\sqrt{1-\rho_{j,j_0}^2}n$ and $\langle\bm u'',\bm h_j\rangle=0$.
This can be achieved by two steps. In the first step, let $\bm
v=\sum_{\ell=0}^{n-1}\varepsilon_\ell\bm h_\ell$ and $\bm u=\bm
v-\langle\bm v,\bm h_{j_0}\rangle\bm h_{j_0}$. Then $\|\bm
v\|_2\leq \sum_{\ell=0}^{n-1}|\varepsilon_\ell|\cdot\|\bm
h_\ell\|_2\leq n$ and $\bm v=\bm u+\langle\bm v,\bm
h_{j_0}\rangle\bm h_{j_0}$. From \eqref{eq-s-cmpt}, we have $\bm
s=\bm v+e_{j_0}\bm h_{j_0}=(e_{j_0}+\langle\bm v,\bm
h_{j_0}\rangle)\bm h_{j_0}+\bm u$. By the definition of $\bm u$
and $\bm v$, we can verify $\langle\bm u,\bm h_{j_0}\rangle=0$,
$\|\bm u\|_2\leq n$ and $\xi_{j_0}=e_{j_0}+\langle\bm v,\bm
h_{j_0}\rangle$ as follows:
\begin{itemize} \item[i)] $\langle\bm u,\bm h_{j_0}\rangle=\langle\bm
v-\langle\bm v,\bm h_{j_0}\rangle\bm h_{j_0},\bm
h_{j_0}\rangle=\langle\bm v,\bm h_{j_0}\rangle-\langle\bm v,\bm
h_{j_0}\rangle\langle\bm h_{j_0},\bm h_{j_0}\rangle=\langle\bm
v,\bm h_{j_0}\rangle-\langle\bm v,\bm h_{j_0}\rangle=0$, where the
third equality holds because $\|\bm h_{j_0}\|_2=1$; \item[ii)]
$\|\bm u\|_2^2=\langle\bm v-\langle\bm v,\bm h_{j_0}\rangle\bm
h_{j_0},\bm v-\langle\bm v,\bm h_{j_0}\rangle\bm
h_{j_0}\rangle=\langle\bm v,\bm v\rangle-2\langle\bm v,\bm
h_{j_0}\rangle^2+\langle\bm v,\bm h_{j_0}\rangle^2=\|\bm
v\|_2^2-\langle\bm v,\bm h_{j_0}\rangle^2\leq \|\bm v\|_2^2\leq
n^2~($noticing that $\|\bm v\|_2\leq n)$, hence $\|\bm u\|_2\leq
n$; \item[iii)] $\xi_{j_0}=\langle\bm s,\bm
h_{j_0}\rangle=\langle(e_{j_0}+\langle\bm v,\bm h_{j_0}\rangle)\bm
h_{j_0}+\bm u,\bm h_{j_0}\rangle=(e_{j_0}+\langle\bm v,\bm
h_{j_0}\rangle)\langle\bm h_{j_0},\bm h_{j_0}\rangle+\langle\bm
u,\bm h_{j_0}\rangle=e_{j_0}+\langle\bm v,\bm h_{j_0}\rangle$,
where the last equality holds because $\|\bm h_{j_0}\|_2=1$ and we
have shown in i) that $\langle\bm u,\bm
h_{j_0}\rangle=0$.\end{itemize} Hence, we can write $\bm
s=\xi_{j_0}\bm h_{j_0}+\bm u$ such that $\langle\bm u,\bm
h_{j_0}\rangle=0$ and $\|\bm u\|_2\leq n$. In the second step, for
each $j\in[n]\backslash\{j_0\}$, let $\bm u'=\frac{\langle\bm
u,\bm h_j\rangle-\rho_{j,j_0}\langle\bm u,\bm
h_{j_0}\rangle}{1-\rho_{j,j_0}^2}\bm h_{j}+\frac{\langle\bm u,\bm
h_{j_0}\rangle-\rho_{j,j_0}\langle\bm u,\bm
h_{j}\rangle}{1-\rho_{j,j_0}^2}\bm h_{j_0}$ and $\bm u''=\bm u-\bm
u'$. Then $\bm u=\bm u'+\bm u''$ and so $\bm s=\xi_{j_0}\bm
h_{j_0}+\bm u'+\bm u''$. For each $j\in[n]\backslash\{j_0\}$, we
can verify that $\langle\bm u'',\bm h_j\rangle=0$ and $\langle\bm
u',\bm h_j\rangle\leq \sqrt{1-\rho_{j,j_0}^2}n$ as follows:
\begin{itemize} \item[iv)] By the definition of $\bm u'$
and $\bm u''$, and by the assumption $\|\bm h_j\|_2=1$, we can
obtain
\begin{align*}\langle\bm u',\bm h_j\rangle&=\frac{\langle\bm u,\bm
h_j\rangle-\rho_{j,j_0}\langle\bm u,\bm
h_{j_0}\rangle}{1-\rho_{j,j_0}^2}\\&~~~~+\frac{\langle\bm u,\bm
h_{j_0}\rangle-\rho_{j,j_0}\langle\bm u,\bm
h_{j}\rangle}{1-\rho_{j,j_0}^2}\rho_{j,j_0}\\
&=\frac{1}{1-\rho_{j,j_0}^2}(1-\rho_{j,j_0}^2)\langle\bm u,\bm
h_{j}\rangle\\&=\langle\bm u,\bm h_{j}\rangle.\end{align*} Since
$\bm u''=\bm u-\bm u'$, we have $\langle\bm u'',\bm
h_{j}\rangle=\langle\bm u-\bm u',\bm h_{j}\rangle=\langle\bm u,\bm
h_{j}\rangle-\langle\bm u',\bm h_{j}\rangle=0$. \item[v)] Similar
to iv), we have
\begin{align*}\langle\bm u',\bm h_{j_0}\rangle&=\frac{\langle\bm
u,\bm h_j\rangle-\rho_{j,j_0}\langle\bm u,\bm
h_{j_0}\rangle}{1-\rho_{j,j_0}^2}\rho_{j,j_0}\\&~~~~+\frac{\langle\bm
u,\bm h_{j_0}\rangle-\rho_{j,j_0}\langle\bm u,\bm
h_{j}\rangle}{1-\rho_{j,j_0}^2}\\
&=\frac{1}{1-\rho_{j,j_0}^2}(1-\rho_{j,j_0}^2)\langle\bm u,\bm
h_{j_0}\rangle\\&=\langle\bm u,\bm
h_{j_0}\rangle\\&=0,\end{align*} so $\langle\bm u'',\bm
h_{j_0}\rangle=\langle\bm u-\bm u',\bm h_{j_0}\rangle=\langle\bm
u,\bm h_{j}\rangle-\langle\bm u',\bm h_{j}\rangle=0$. Noticing
that $\bm u'$ is a linear combination of $\bm h_{j}$ and $\bm
h_{j_0}$, and we have shown that $\langle\bm u'',\bm
h_{j}\rangle=0$ and $\langle\bm u'',\bm h_{j_0}\rangle=0$, then we
can obtain $\langle\bm u'',\bm u'\rangle=0$, and so we have $\|\bm
u\|_2^2=\langle\bm u'+\bm u'',\bm u'+\bm u''\rangle=\langle\bm
u',\bm u'\rangle+\langle\bm u'',\bm u''\rangle=\|\bm
u'\|_2^2+\|\bm u''\|_2^2$, which implies that $\|\bm
u'\|_2\leq\|\bm u\|_2\leq n$. Since $\langle\bm u',\bm
h_{j_0}\rangle=0$, by Lemma \ref{lem-vtcl-cdot}, we can obtain
$\langle\bm u',\bm h_j\rangle=\sqrt{1-\rho_{j,j_0}^2}\|\bm
u'\|_2\leq\sqrt{1-\rho_{j,j_0}^2}n$.\end{itemize} By the above
discussions, we can obtain $\bm s=\xi_{j_0}\bm h_{j_0}+\bm u'+\bm
u''$ such that for each $j\in[n]\backslash\{j_0\}$, $\langle\bm
u',\bm h_j\rangle\leq \sqrt{1-\rho_{j,j_0}^2}n$ and $\langle\bm
u'',\bm h_j\rangle=0$. From this and by the definition of
$\xi_{j}$, we can obtain $|\xi_{j}|=|\langle\bm s,\bm
h_j\rangle|=|\langle\xi_{j_0}\bm h_{j_0}+\bm u'+\bm u'',\bm
h_j\rangle|\leq |\xi_{j_0}\langle\bm h_{j_0},\bm
h_{j}\rangle|+|\langle\bm u',\bm h_j\rangle|\leq
|\xi_{j_0}\rho_{j,j_0}|+\sqrt{1-\rho_{j,j_0}^2}n$, which proves
Claim 1.

Now, we can prove statements 1), 2) and 3) as follows.

1) Note that by \eqref{eq-xi-cmpt}, we have
\begin{align*}|\xi_{j_0}|&=|e_{j_0}+\sum_{\ell=0}^{n-1}\varepsilon_\ell\langle\bm
h_{j_0},\bm h_\ell\rangle|\\&\geq
|e_{j_0}|-\sum_{\ell=0}^{n-1}|\varepsilon_\ell\langle\bm
h_{j_0},\bm h_\ell\rangle|\\&\geq |e_{j_0}|-n.\end{align*} Hence,
if $|e_{j_0}|>\Delta=\theta+n$, then $|\xi_{j_0}|\geq
|e_{j_0}|-n>\theta$.

2) Since $0\leq|\rho_{j,j_0}|\leq\rho<1$, we have
$\sqrt{\frac{1+|\rho_{j,j_0}|}{1-|\rho_{j,j_0}|}}n\leq\sqrt{\frac{1+\rho}{1-\rho}}n=\theta$.
Moreover, since by assumption
$|\xi_{j_0}|>\theta=\sqrt{\frac{1+\rho}{1-\rho}}n$, then we can
obtain
$\sqrt{\frac{1+|\rho_{j,j_0}|}{1-|\rho_{j,j_0}|}}n<|\xi_{j_0}|$,
which implies
$$|\xi_{j_0}|\sqrt{1-|\rho_{j,j_0}|}-\sqrt{1+|\rho_{j,j_0}|}n>0.$$
By Claim 1, we have $|\xi_{j}|\leq
|\xi_{j_0}\rho_{j,j_0}|+\sqrt{1-\rho_{j,j_0}^2}n$ for all
$j\in[n\rangle\backslash\{j_0\}$, then
\begin{align*}|\xi_{j_0}|-|\xi_{j}|&\geq|\xi_{j_0}|
-\left(|\xi_{j_0}\rho_{j,j_0}|+\sqrt{1-\rho_{j,j_0}^2}n\right)\\
&=|\xi_{j_0}|(1-|\rho_{j,j_0}|)-\sqrt{1-\rho_{j,j_0}^2}n\\
&=\sqrt{1-|\rho_{j,j_0}|}\Big(|\xi_{j_0}|\sqrt{1-|\rho_{j,j_0}|}
-\sqrt{1+|\rho_{j,j_0}|}n\Big)\\&>0\end{align*} which implies
$|\xi_{j_0}|>|\xi_{j}|$.

3) Since by assumption
$|\xi_{j_0}|\leq\theta=\sqrt{\frac{1+\rho}{1-\rho}}n$, and by
Claim 1, $|\xi_{j}|\leq
|\xi_{j_0}\rho_{j,j_0}|+\sqrt{1-\rho_{j,j_0}^2}n$ for all
$j\in[n\rangle\backslash\{j_0\}$, then
\begin{align}\label{eq1-exp-dvn}|\xi_{j}|&\leq
|\xi_{j_0}\rho_{j,j_0}|+\sqrt{1-\rho_{j,j_0}^2}n\nonumber\\
&\leq\theta|\rho_{j,j_0}|+\sqrt{1-\rho_{j,j_0}^2}n.\end{align}
Further, since $0\leq|\rho_{j,j_0}|\leq\rho<1$, then
$\sqrt{\frac{1+|\rho_{j,j_0}|}{1-|\rho_{j,j_0}|}}n\leq\sqrt{\frac{1+\rho}{1-\rho}}n=\theta$,
which implies
\begin{align}\label{eq2-exp-dvn}\theta\sqrt{1-|\rho_{j,j_0}|}\geq\sqrt{1+|\rho_{j,j_0}|}n.\end{align}
Therefore, for all $j\in[n\rangle\backslash\{j_0\}$, we can obtain
\begin{align*}\theta-|\xi_{j}|&\geq
\theta-\Big(\theta|\rho_{j,j_0}|+\sqrt{1-\rho_{j,j_0}^2}n\Big)\\
&=\theta(1-|\rho_{j,j_0}|)-\sqrt{1-\rho_{j,j_0}^2}n\\
&=\sqrt{1-|\rho_{j,j_0}|}\Big(\theta\sqrt{1-|\rho_{j,j_0}|}-\sqrt{1+|\rho_{j,j_0}|}n\Big)\\&\geq
0\end{align*} where the first inequality comes from
\eqref{eq1-exp-dvn} and the second inequality comes from
\eqref{eq2-exp-dvn}. Thus, we have $|\xi_{j}|\leq \theta$.
\end{proof}

The correctness of the decoder $\mathcal D_1$ is given by the
following theorem.
\begin{thm}\label{thm-crt-1-err}
The decoder $\mathcal D_1$ of $\mathcal C$ can correct a single
error with respect to the threshold pair $\big(\delta=1, \Delta=
\frac{\sqrt{1+\rho}+\sqrt{1-\rho}}{\sqrt{1-\rho}}n\big)$.
\end{thm}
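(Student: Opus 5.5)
We verify conditions (D1) and (D2) of Definition \ref{def-anl-dec} with $\tau=1$, $\sigma=0$, $\delta=1$, and $\Delta=\theta+n$, where $\theta=\sqrt{\frac{1+\rho}{1-\rho}}\,n$. First we check that this $\Delta$ is the quantity in the statement:
$$\theta+n=\frac{\sqrt{1+\rho}+\sqrt{1-\rho}}{\sqrt{1-\rho}}\,n.$$
Fix $\bm y=\bm c+\bm\varepsilon+\bm e$ with $\bm c\in\mathcal C$, $\bm\varepsilon\in\mathcal Q(n,1)$ and $\mathsf w(\bm e)\le 1$. If $\bm e\neq\bm 0$, let $j_0$ be the unique index in $\mathsf{Supp}(\bm e)$. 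If $\bm e=\bm 0$, pick $j_0$ arbitrarily; then $\mathsf{Supp}(\bm e)\subseteq\{j_0\}$ still holds, so Lemma \ref{lem-exp-dvn} applies for every choice of $j_0$. The decoder never outputs `$\mathsf e$', so only the set inclusions need checking.

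\emph{Case A: $|\xi_{j_0}|\le\theta$.} By statement 3) of Lemma \ref{lem-exp-dvn}, $|\xi_j|\le\theta$ for every $j$, so $\mathcal D_1(\bm y)=\emptyset$. Then (D1) holds trivially. By the contrapositive of statement 1), $|e_{j_0}|\le\Delta$, so $\mathsf{Supp}_\Delta(\bm e)=\emptyset$ and (D2) holds.

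\emph{Case B: $|\xi_{j_0}|>\theta$.} By statement 2), $|\xi_{j_0}|>|\xi_j|$ for all $j\ne j_0$. Hence the maximizer is unique and the decoder outputs $\{j_0\}$. We must check $\{j_0\}\subseteq\mathsf{Supp}(\bm e)$. If $\bm e\neq\bm 0$, this holds by the choice of $j_0$. The subtle subcase is $\bm e=\bm 0$, which we rule out as follows. If $\bm e=\bm 0$, apply Case A's reasoning: statement 1) gives no information, but the statements hold for every choice of index. Suppose some $|\xi_{j}|>\theta$. Choosing $j_0$ to be that index and then any other index $j_1$, statement 2) applied with $j_1$ in place of $j_0$ requires either $|\xi_{j_1}|\le\theta$ or $|\xi_{j_1}|>|\xi_j|$. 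Statement 3) with base $j_1$ and $|\xi_{j_1}|\le\theta$ would force $|\xi_j|\le\theta$, a contradiction. So $|\xi_{j_1}|>|\xi_j|>\theta$, and applying statement 2) with base $j$ gives $|\xi_j|>|\xi_{j_1}|$, also a contradiction. This requires $n\ge 2$; for $n=1$, directly $|\xi_0|\le\|\bm s\|_2\le n<\theta$ when $\rho>0$. Alternatively, bound directly: $|\xi_j|\le\|\bm s\|_2\le n\le\theta$. This direct bound is cleaner and is the route we would use. Hence $\bm e=\bm 0$ forces Case A, and in Case B $j_0\in\mathsf{Supp}(\bm e)$, giving (D1). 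Condition (D2) also holds, since $\mathsf{Supp}_\Delta(\bm e)\subseteq\mathsf{Supp}(\bm e)=\{j_0\}=\mathcal D_1(\bm y)$.

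The main obstacle is exactly this bookkeeping: the decoder must not output a wrong index when the error is absent or small. We handle it with the bound $\|\bm s\|_2=\|H\bm\varepsilon^\top\|_2\le n\le\theta$ when $\bm e=\bm 0$, and with statement 2) ensuring uniqueness of the maximizer otherwise.
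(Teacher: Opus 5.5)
Your proof is correct and follows the same route as the paper: a case split on whether $|\xi_{j_0}|$ exceeds $\theta$, settled by statements 1)--3) of Lemma~\ref{lem-exp-dvn}. It is also slightly more careful than the paper's own argument in two places. First, your bound $|\xi_j|\le\|\bm s\|_2\le n\le\theta$ when $\bm e=\bm 0$ rules out the decoder returning an index outside $\mathsf{Supp}(\bm e)$, a subcase the paper's second bullet passes over by tacitly taking $e_{j_0}\neq 0$. Second, your split $|\xi_{j_0}|\le\theta$ versus $|\xi_{j_0}|>\theta$ covers the boundary case $|\xi_{j_0}|=\theta$, which the paper's strict inequality in its third bullet leaves out.
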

\begin{proof}
According to Definition \ref{def-anl-dec}, we need to prove that
$\mathsf{Supp}_\Delta(\bm e)\subseteq\mathcal D(\bm y)\subseteq
\mathsf{Supp}(\bm e)$ for all $\bm y=\bm c+\bm\varepsilon+\bm e$
such that $\bm c\in\mathcal C$, $\bm\varepsilon\in\mathcal
Q(n,\delta)$ and $\bm e\in\mathcal B(n,1)$. This can be easily
proved by Lemma \ref{lem-exp-dvn}. In fact, suppose
$\mathsf{Supp}_\Delta(\bm e)\subseteq\{j_0\}$. Then we have the
following arguments.
\begin{itemize}
 \item If $|e_{j_0}|>\Delta$, then $\mathsf{Supp}_\Delta(\bm e)=\{j_0\}$. By 1) of Lemma
 \ref{lem-exp-dvn}, we have $|e_{j_0}|>\Delta$. Further, by 2) of Lemma
 \ref{lem-exp-dvn} and by the definition of $\mathcal D_1$, we
 have $\mathcal D_1(\bm y)=\{j_0\}$. Hence, $\mathcal D_1(\bm y)=\mathsf{Supp}_\Delta(\bm e)=
 \mathsf{Supp}(\bm e)$.
 \item If $|e_{j_0}|\leq\Delta$ and $|\xi_{j_0}|>\theta$, then $\mathsf{Supp}_\Delta(\bm
 e)=\emptyset$. By 2) of Lemma
 \ref{lem-exp-dvn} and by the definition of $\mathcal D_1$, we
 have $\mathcal D_1(\bm y)=\{j_0\}$. Hence, $\mathsf{Supp}_\Delta(\bm
 e)\subseteq\mathcal D_1(\bm y)=\mathsf{Supp}(\bm e)$.
 \item If $|e_{j_0}|\leq\Delta$ and $|\xi_{j_0}|<\theta$, then $\mathsf{Supp}_\Delta(\bm
 e)=\emptyset$. By 3) of Lemma
 \ref{lem-exp-dvn} and by the definition of $\mathcal D_1$, we
 have $\mathcal D_1(\bm y)=\emptyset$ and so $\mathsf{Supp}_\Delta(\bm
 e)=\mathcal D_1(\bm y)\subseteq\mathsf{Supp}(\bm e)$.
\end{itemize}
Thus, we always have $\mathsf{Supp}_\Delta(\bm e)\subseteq\mathcal
D(\bm y)\subseteq \mathsf{Supp}(\bm e)$. By Definition
\ref{def-anl-dec}, $\mathcal D_1$ can correct a single error with
respect to the threshold pair $\big(\delta=1, \Delta=
\frac{\sqrt{1+\rho}+\sqrt{1-\rho}}{\sqrt{1-\rho}}n\big)$.
\end{proof}

\begin{rem}
We can verify that
$\frac{2n}{\sqrt{1-\rho^2}}<\Delta\leq\frac{2n}{\sqrt{1-\rho}}
<\sqrt{2}\cdot\frac{2n}{\sqrt{1-\rho^2}}$. In fact, by definition,
we have
$\Delta=\theta+n=\left(\sqrt{\frac{1+\rho}{1-\rho}}+1\right)n=
\frac{\sqrt{1+\rho}+\sqrt{1-\rho}}{\sqrt{1-\rho}}n$ and
$0\leq\rho<1$. It is easy to see that
$\frac{2}{\sqrt{1-\rho^2}}<\frac{\sqrt{1+\rho}+\sqrt{1-\rho}}{\sqrt{1-\rho}}
\leq\frac{2}{\sqrt{1-\rho}}$. Hence,
$\frac{2n}{\sqrt{1-\rho^2}}<\Delta\leq\frac{2n}{\sqrt{1-\rho}}
=\sqrt{1+\rho}\frac{2n}{\sqrt{1-\rho^2}}<\sqrt{2}\cdot\frac{2n}{\sqrt{1-\rho^2}}$.
\end{rem}

\section{Single error correcting codes with redundancy three}

In this section, we construct a family of $[n,k=n-3]$ linear codes
$\mathcal C$ for any $n>3$ with $\Gamma_{2}(\mathcal C)\leq
2\sqrt{2}n/\sin(\pi/\sqrt{n-1})$ and the decoder $\mathcal D_1$ is
applicable to this family of codes. Specifically, we have the
following construction.

\textbf{Construction 1}: Let $t>3$ be an integer. Let
$\Omega=\bigcup_{i=0}^{t}\Omega_i$ such that
$\Omega_0=\{(0,0,1)^\top\},$
\begin{align*}\Omega_i&=\left\{(\sin\phi_i\cos\theta_{i,j},
\sin\phi_i\sin\theta_{i,j}, \cos\phi_i)^\top:
~\phi_i=\frac{\pi}{2t}i,\right.
\\&~~~~~~\left.\theta_{i,j}=\frac{\pi}{2i}j,~ j\in[4i\rangle\right\},
~~i\in[1:t]\end{align*} and
\begin{align*}\Omega_i&=\left\{(\sin\phi_i\cos\theta_{i,j},
\sin\phi_i\sin\theta_{i,j}, \cos\phi_i)^\top:~
\phi_i=\frac{\pi}{2t}i,\right.
\\&~~~~~~\left.\theta_{i,j}=\frac{\pi}{2i}j,~ j\in[2i\rangle\right\},
~~i=t.\end{align*} Let $H$ be the $3\times n$ matrix whose columns
are all vectors in $\Omega$ and $\mathcal C$ be the code that has
$H$ as a parity check matrix.

Clearly, $\Omega_t=\left\{(\cos\theta_{t,j}, \sin\theta_{t,j},
0)^\top: \theta_{t,j}=\frac{\pi}{2t}j,~ j\in[2t\rangle]\right\}$
and $n=|\Omega|=1+4\sum_{i=1}^{t-1}i+2t=2t^2+1$. To estimate
$\Gamma_{2}(\mathcal C)$, we need the following two lemmas.

\begin{lem}\label{lem-agl-dlong} Suppose $\ell\geq 1$ and
$0\leq x\leq\frac{\pi}{2}$. Then we have
$\sin\frac{\pi}{2\ell}\sin x\geq\sin\frac{x}{\ell}$.
\end{lem}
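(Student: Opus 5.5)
We prove $\sin\frac{\pi}{2\ell}\sin x\geq\sin\frac{x}{\ell}$ for $\ell\geq 1$ and $0\le x\le\frac{\pi}{2}$ by concavity, comparing both sides as functions of $x$ on $[0,\frac{\pi}{2}]$ for fixed $\ell$. Set
\[
f(x)=\sin\frac{\pi}{2\ell}\sin x-\sin\frac{x}{\ell}.
\]
The plan is to show that $f\ge 0$ at both endpoints and that $f$ is concave on the interval; a concave function is then nonnegative on the whole interval.

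\textbf{Endpoints.} At $x=0$ we have $f(0)=0$. At $x=\frac{\pi}{2}$ we have $f(\frac{\pi}{2})=\sin\frac{\pi}{2\ell}-\sin\frac{\pi}{2\ell}=0$. So $f$ vanishes at both ends.

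\textbf{Concavity.} Differentiating twice gives
\[
f''(x)=-\sin\frac{\pi}{2\ell}\sin x+\frac{1}{\ell^2}\sin\frac{x}{\ell}.
\]
Concavity therefore reduces to $\ell^2\sin\frac{\pi}{2\ell}\sin x\ge \sin\frac{x}{\ell}$ on $[0,\frac{\pi}{2}]$. This is the one spot that needs care, and I would handle it with two standard estimates.
\begin{itemize}
\item Jordan's inequality gives $\sin y\ge \frac{2}{\pi}y$ for $y\in[0,\frac{\pi}{2}]$. Applied twice, it yields $\sin\frac{\pi}{2\ell}\ge \frac{1}{\ell}$ and $\sin x\ge\frac{2x}{\pi}$.
\item Together these give $\ell^2\sin\frac{\pi}{2\ell}\sin x\ge \frac{2\ell x}{\pi}$.
\item Since $\ell\ge1$ and $\frac{2}{\pi}\ge\frac{1}{\ell^2}$ fails only when $\ell^2<\frac{\pi}{2}$, compare directly: $\frac{2\ell x}{\pi}\ge \frac{x}{\ell}$ iff $2\ell^2\ge\pi$, i.e. $\ell\ge\sqrt{\pi/2}\approx1.25$.
\item In that range, $\sin\frac{x}{\ell}\le\frac{x}{\ell}$ finishes the bound.
\end{itemize}

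The remaining range is $1\le\ell<\sqrt{\pi/2}$. Here I would instead use $\sin\frac{x}{\ell}\le \frac{1}{\ell}\cdot\frac{\sin x}{?}$; more cleanly, $\frac{\sin(x/\ell)}{\sin x}\le \frac{x/\ell}{\sin x}\le\frac{\pi}{2\ell}$ via Jordan. It then suffices that $\ell^3\sin\frac{\pi}{2\ell}\ge\frac{\pi}{2}$. Since $\sin\frac{\pi}{2\ell}\ge\frac1\ell$, this needs $\ell^2\ge\frac{\pi}{2}$, which again fails for small $\ell$.

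\textbf{Alternative for the obstacle.} The concavity step is the main obstacle for $\ell$ near $1$, so I would avoid the second derivative and use the monotone-ratio method instead. Set $g(x)=\frac{\sin(x/\ell)}{\sin x}$ on $(0,\frac{\pi}{2}]$. The inequality is exactly $g(x)\le g(\frac{\pi}{2})=\sin\frac{\pi}{2\ell}$, so it suffices to show $g$ is nondecreasing. We have $g'\ge0$ iff
\[
\frac{1}{\ell}\cos\frac{x}{\ell}\sin x\ge \sin\frac{x}{\ell}\cos x,
\]
i.e. $\frac{\tan x}{x}\ge\frac{\tan(x/\ell)}{x/\ell}$. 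This holds because $\frac{\tan y}{y}$ is increasing on $(0,\frac{\pi}{2})$ and $x/\ell\le x$.

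This settles all $\ell\ge1$ uniformly. The endpoint $x=\frac{\pi}{2}$ and the case $x=0$, where both sides vanish, are checked directly. The fact that $\tan y/y$ is increasing follows from $(\tan y/y)'=\frac{y-\sin y\cos y}{y^2\cos^2y}\ge0$, since $\sin y\cos y=\frac12\sin 2y\le y$. I will present the proof via this ratio argument.
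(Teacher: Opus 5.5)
Your final argument, the monotone-ratio route, is correct and complete, but it differs from the paper's proof. On $(0,\frac{\pi}{2})$ both $\cos x$ and $\cos\frac{x}{\ell}$ are positive. So $g'(x)\ge0$ is equivalent to $\frac{\tan x}{x}\ge\frac{\tan(x/\ell)}{x/\ell}$, which follows from $x/\ell\le x$ and the monotonicity of $\tan y/y$. Continuity of $g$ at $\frac{\pi}{2}$, where $g(\frac{\pi}{2})=\sin\frac{\pi}{2\ell}$, then gives $g(x)\le g(\frac{\pi}{2})$.

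The paper instead carries out exactly the concavity argument you abandoned. Writing $y=\frac{\pi}{2\ell}$, it shows $f''<0$ on $(0,\frac{\pi}{2})$ and combines this with $f(0)=f(\frac{\pi}{2})=0$.

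The obstacle you hit there was self-inflicted. Replacing $\sin x$ by $\frac{2x}{\pi}$ costs a factor $\frac{\pi}{2}$ near $x=0$, which is precisely the slack missing at $\ell=1$. The right comparison uses monotonicity of sine on $[0,\frac{\pi}{2}]$, giving $\sin\frac{x}{\ell}\le\sin x$, together with Jordan's inequality, giving $\sin\frac{\pi}{2\ell}\ge\frac{1}{\ell}\ge\frac{1}{\ell^2}$. Together these yield $\frac{1}{\ell^2}\sin\frac{x}{\ell}\le\sin\frac{\pi}{2\ell}\sin x$, so $f''\le 0$ for every $\ell\ge1$ with no case split.

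Here is what each route buys:
\begin{itemize}
\item The concavity route is shorter and needs only Jordan's inequality and monotonicity of sine.
\item Your route needs the extra fact that $\tan y/y$ is increasing, but it proves more. It shows that $x\mapsto\sin(x/\ell)/\sin x$ is nondecreasing on $(0,\frac{\pi}{2}]$. Equivalently, $\sin\frac{x}{\ell}\sin x'\le\sin\frac{x'}{\ell}\sin x$ for all $0<x\le x'\le\frac{\pi}{2}$, and the lemma is the special case $x'=\frac{\pi}{2}$.
\end{itemize}

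When you write it up, drop the two failed concavity attempts, including the unfinished expression with the question mark.
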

\begin{proof}
It suffices to prove $\sin x\sin y-\sin(\frac{2}{\pi}xy)\geq 0$
for all $x,y\in[0,\frac{\pi}{2}]$. In fact, the conclusion of this
lemma can be obtained from this result by letting
$y=\frac{\pi}{2\ell}$.

Clearly, if $y=0$ or $y=\frac{\pi}{2}$, then for all
$x\in[0,\frac{\pi}{2}]$, we have $\sin x\sin
y=\sin(\frac{2}{\pi}xy)$. So, we only need to consider
$y\in(0,\frac{\pi}{2})$. For each fixed $y$, denote
$$f(x)=\sin x\sin
y-\sin\Big(\frac{2}{\pi}xy\Big),~~x\in\Big[0,\frac{\pi}{2}\Big].$$
Then it suffices to prove $f(x)\geq 0$ for all
$x\in[0,\frac{\pi}{2}]$. To prove this, we first prove the
following two claims.

\emph{Claim 2}: $1>\sin y>\frac{2}{\pi}y>0$ for all
$y\in(0,\frac{\pi}{2})$.
\begin{proof}[Proof of Claim 2]
It is easy to see that $1>\sin y$ and $\frac{2}{\pi}y>0$
for all $y\in(0,\frac{\pi}{2})$. To prove $\sin y>\frac{2}{\pi}y$,
consider the function $h(y)=\sin y-\frac{2}{\pi}y$ with
$y\in[0,\frac{\pi}{2}]$. We have $h(0)=h(\frac{\pi}{2})=0$ and
$h'(y)=\cos y-\frac{2}{\pi}$. Clearly, $h'(y)>0$ for
$y\in[0,\arcsin\frac{2}{\pi})$ and $h'(y)<0$ for
$y\in(\arcsin\frac{2}{\pi},\frac{\pi}{2}]$, so
$h(y)>h(0)=h(\frac{\pi}{2})=0$ for all $y\in(0,\frac{\pi}{2})$,
which implies that $\sin y>\frac{2}{\pi}y$.\end{proof}

\emph{Claim 3}: There exists an $x_0\in(0,\frac{\pi}{2})$ such
that $f(x)$ is increasing in $[0,x_0]$ and $f(x)$ is decreasing in
$[x_0,\frac{\pi}{2}]$.
\begin{proof}[Proof of Claim 3]
To prove Claim 3, we note that
$$f'(x)=\sin y\cos
x-\frac{2}{\pi}y\cos\Big(\frac{2}{\pi}xy\Big)$$ and for
$x\in(0,\frac{\pi}{2})$, we have
\begin{align*}
f''(x)&=-\sin y\sin
x+\left(\frac{2}{\pi}y\right)^2\sin\Big(\frac{2}{\pi}xy\Big)\\
&<-\sin y\sin x+\left(\frac{2}{\pi}y\right)^2\sin(x)\\
&=-\sin x\left(\sin y-\left(\frac{2}{\pi}y\right)^2\right)\\&<0
\end{align*}
where the first inequality holds because $y\in(0,\frac{\pi}{2})$
and the last inequality holds because by Claim 2, we have $1>\sin
y>\frac{2}{\pi}y>0$ for $y\in(0,\frac{\pi}{2})$. Therefore,
$f'(x)$ is decreasing in $[0,\frac{\pi}{2}]$. Clearly, for any
fixed $y\in\Big(0,\frac{\pi}{2}\Big)$, we have
$$f'\Big(\frac{\pi}{2}\Big)=-\frac{2}{\pi}y\cos y<0$$ and by Claim 2, we have
$$f'(0)=\sin y-\frac{2}{\pi}y>0.$$ Then there exists an
$x_0\in(0,\frac{\pi}{2})$ such that $f'(x_0)=0$, $f'(x_0)>0$ for
$x\in(0,x_0)$ and $f'(x_0)<0$ for $x\in(x_0,\frac{\pi}{2})$, which
implies that $f(x)$ is increasing in $[0,x_0]$ and $f(x)$ is
decreasing in $[x_0,\frac{\pi}{2}]$. 
\end{proof}

Note that by the definition of $f(x)$, we can obtain
$f(0)=f(\frac{\pi}{2})=0$. Then by Claim 3, $f(x)\geq 0$ for all
$x\in[x_0,\frac{\pi}{2}]$.
\end{proof}

\begin{lem}\label{lem-Cnstrn-bnd}
Let $\Omega$ be obtained from Construction 1. It holds that
$|\langle\bm u, \bm u'\rangle|\leq \cos\frac{\pi}{2t}$ for all
distinct $\bm u,\bm u'\in\Omega$.
\end{lem}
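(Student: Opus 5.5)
The plan is to read $\langle\bm u,\bm u'\rangle$ as the cosine of the spherical angle between two unit vectors. The bound $|\langle\bm u,\bm u'\rangle|\leq\cos\frac{\pi}{2t}$ is then equivalent to two inequalities:
\begin{align*}
\langle\bm u,\bm u'\rangle&\leq\cos\frac{\pi}{2t}, &\langle\bm u,\bm u'\rangle&\geq-\cos\frac{\pi}{2t}.
\end{align*}
Geometrically, the angle between any two distinct points of $\Omega$ must lie in $[\frac{\pi}{2t},\pi-\frac{\pi}{2t}]$. Write $\bm u=(\sin\phi\cos\theta,\sin\phi\sin\theta,\cos\phi)^\top$ and $\bm u'=(\sin\phi'\cos\theta',\sin\phi'\sin\theta',\cos\phi')^\top$, with $\phi=\frac{\pi}{2t}i$ and $\phi'=\frac{\pi}{2t}i'$, where $i,i'\in[0:t+1]$. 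A direct expansion gives
$$\langle\bm u,\bm u'\rangle=\cos\phi\cos\phi'+\sin\phi\sin\phi'\cos(\theta-\theta').$$

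\emph{Lower bound $\langle\bm u,\bm u'\rangle\geq-\cos\frac{\pi}{2t}$.} If $\bm u,\bm u'$ are not both in $\Omega_t$, then $\phi+\phi'\leq\pi-\frac{\pi}{2t}$. Since all sines are nonnegative and $\cos(\theta-\theta')\geq-1$, the expansion gives
$$\langle\bm u,\bm u'\rangle\geq\cos(\phi+\phi')\geq\cos\Big(\pi-\frac{\pi}{2t}\Big)=-\cos\frac{\pi}{2t}.$$
If both lie in $\Omega_t$, then $\langle\bm u,\bm u'\rangle=\cos(\theta-\theta')$. The angles $\theta_{t,j}=\frac{\pi}{2t}j$ with $j\in[2t\rangle$ lie in $[0,\pi-\frac{\pi}{2t}]$, so for distinct points $|\theta-\theta'|\in[\frac{\pi}{2t},\pi-\frac{\pi}{2t}]$. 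Hence $|\cos(\theta-\theta')|\leq\cos\frac{\pi}{2t}$. This case also settles the upper bound for pairs in $\Omega_t$, and it is exactly why $\Omega_t$ keeps only half of its circle.

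\emph{Upper bound $\langle\bm u,\bm u'\rangle\leq\cos\frac{\pi}{2t}$.} If $i\neq i'$, use $\cos(\theta-\theta')\leq1$ to get
$$\langle\bm u,\bm u'\rangle\leq\cos(\phi-\phi')\leq\cos\frac{\pi}{2t},$$
because $\frac{\pi}{2t}\leq|\phi-\phi'|\leq\frac{\pi}{2}$. (This also covers $\Omega_0$.) If $i=i'\geq1$, rewrite the expansion as
$$\langle\bm u,\bm u'\rangle=1-2\sin^2\phi\,\sin^2\frac{\theta-\theta'}{2}.$$
It therefore suffices to show $\sin\phi\,|\sin\frac{\theta-\theta'}{2}|\geq\sin\frac{\pi}{4t}$, since $1-2\sin^2\frac{\pi}{4t}=\cos\frac{\pi}{2t}$. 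On ring $i$ the angular gap satisfies $|\theta-\theta'|\in[\frac{\pi}{2i},2\pi-\frac{\pi}{2i}]$, so $|\sin\frac{\theta-\theta'}{2}|\geq\sin\frac{\pi}{4i}$.

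This within-ring step is the main obstacle, and it is where Lemma \ref{lem-agl-dlong} enters. Apply it with $\ell=2i$ and $x=\phi_i=\frac{\pi i}{2t}\in[0,\frac{\pi}{2}]$:
$$\sin\frac{\pi}{4i}\,\sin\frac{\pi i}{2t}\geq\sin\frac{\pi i/(2t)}{2i}=\sin\frac{\pi}{4t}.$$
This gives the required inequality and completes the upper bound, hence the lemma. What remains is bookkeeping: check that the range of $j$ on each ring gives the stated gaps, and that $\Omega_0$ and $\Omega_t$ are handled by the cases above.
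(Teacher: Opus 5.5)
Your proof is correct and essentially follows the paper's argument. It uses the same expansion of the inner product and treats the half-circle $\Omega_t$ separately. The key within-ring step is also the same: Lemma~\ref{lem-agl-dlong} with $\ell=2i$, $x=\phi_i$, combined with $1-2\sin^2\frac{\pi}{4t}=\cos\frac{\pi}{2t}$. The only difference is organizational: you split the argument into an upper bound and a lower bound, with the single lower bound $\langle\bm u,\bm u'\rangle\geq\cos(\phi+\phi')$ covering every pair not both in $\Omega_t$. The paper instead runs through four ring-pair cases and bounds $|\langle\bm u,\bm u'\rangle|$ directly in each.
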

\begin{proof}
To prove the conclusion of this lemma, we need to consider the
following four cases.

Case 1: $\bm u\in\Omega_0$ and $\bm u'\in\cup_{i=1}^t\Omega_i$. By
the construction, it is easy to see that $\langle\bm u, \bm
u'\rangle=\cos\phi_i$, where $\phi_i=\frac{\pi}{2t}i$ and $1\leq
i\leq t$. So, $|\langle\bm u, \bm
u'\rangle|=|\cos(\frac{\pi}{2t}i)|\leq \cos\frac{\pi}{2t}$, where
the inequality holds because $\frac{\pi}{2t}\leq \phi_i
=\frac{\pi}{2t}i\leq \frac{\pi}{2}$ for $1\leq i\leq t$.

Case 2: $\bm u\in\Omega_i$ and $\bm u'\in\Omega_{i'}$ such that
$1\leq i<i'\leq t$. By the construction, $\bm
u=(\sin\phi_i\cos\theta_{i,j}, \sin\phi_i\sin\theta_{i,j},
\cos\phi_i)^\top$ and $\bm u'=(\sin\phi_{i'}\cos\theta_{i',j'},
\sin\phi_{i'}\sin\theta_{i',j'}, \cos\phi_{i'})^\top$ such that
$j\in[4i\rangle$ and $j'\in[4i'\rangle~($or $j'\in[2i'\rangle$ if
$i'=t)$, so we have
\begin{align*}\langle\bm u, \bm
u'\rangle&=\sin\phi_i\sin\phi_{i'}\cos\theta_{i,j}\cos\theta_{i',j'}\\
&~~~+
\sin\phi_i\sin\phi_{i'}\sin\theta_{i,j}\sin\theta_{i',j'}+\cos\phi_i\cos\phi_{i'}\\
&=\sin\phi_i\sin\phi_{i'}\cos(\theta_{i,j}-\theta_{i',j'})+
\cos\phi_i\cos\phi_{i'}\end{align*} 
which implies that $|\langle\bm u, \bm
u'\rangle|\leq|\sin\phi_i\sin\phi_{i'}|+|\cos\phi_i\cos\phi_{i'}|$.
Also by the construction, we can obtain
$\frac{\pi}{2t}\leq\phi_i<\phi_{i'}\leq\frac{\pi}{2}$ and
$\frac{\pi}{2t}\leq\phi_{i'}-\phi_{i}<\frac{\pi}{2}$. Therefore,
we have $|\langle\bm u, \bm
u'\rangle|\leq|\sin\phi_i\sin\phi_{i'}|+
|\cos\phi_i\cos\phi_{i'}|=\sin\phi_i\sin\phi_{i'}+
\cos\phi_i\cos\phi_{i'}=\cos(\phi_{i'}-\phi_{i})\leq\cos\frac{\pi}{2t}$.

Case 3: $\bm u, \bm u'\in\Omega_t$. By the construction, we have
$\bm u=\big(\cos(\frac{\pi}{2t}j), \sin(\frac{\pi}{2t}j),
0\big)^\top$ and $\bm u'=\big(\cos(\frac{\pi}{2t}j'),
\sin(\frac{\pi}{2t}j'), 0\big)^\top$ such that $0\leq j<j'\leq
2t-1$, so
\begin{align*}|\langle\bm u, \bm
u'\rangle|&=\Big|\cos\Big(\frac{\pi}{2t}j\Big)\cos\Big(\frac{\pi}{2t}j'\Big)+
\sin\Big(\frac{\pi}{2t}j\Big)\sin\Big(\frac{\pi}{2t}j'\Big)\Big|\\&=\Big|\cos\Big(\frac{\pi}{2t}(j'-j)\Big)\Big|\\
&\leq\cos\frac{\pi}{2t}\end{align*} 
where the inequality holds because
$\frac{\pi}{2t}\leq \frac{\pi}{2t}(j'-j)\leq \pi-\frac{\pi}{2t}$
for $0\leq j<j'\leq 2t-1$.

Case 4: $\bm u,\bm u'\in\Omega_i$ for some $i\in[1:t]$. By the
construction, we can let $\bm u=(\sin\phi_i\cos\theta_{i,j},
\sin\phi_i\sin\theta_{i,j}, \cos\phi_i)^\top$ and $\bm
u'=(\sin\phi_{i}\cos\theta_{i,j'}, \sin\phi_{i}\sin\theta_{i,j'},
\cos\phi_{i})^\top$ such that $0\leq j<j'\leq 4i-1$. Therefore,
\begin{align}\label{dot-u-up}
\langle\bm u, \bm u'\rangle&=\sin^2\phi_i\cos\theta_{i,j}\cos\theta_{i,j'}+
\sin^2\phi_i\sin\theta_{i,j}\sin\theta_{i,j'}\nonumber\\&~~~+\cos^2\phi_i\nonumber\\
&=\sin^2\phi_i\cos(\theta_{i,j'}-\theta_{i,j})+
\cos^2\phi_i.\end{align} Also by the construction, we have
$\frac{\pi}{2i}\leq
\theta_{i,j'}-\theta_{i,j}\leq2\pi-\frac{\pi}{2i}$, so
$-\sin^2\phi_i+\cos^2\phi_i\leq\sin^2\phi_i\cos(\theta_{i,j'}-\theta_{i,j})+
\cos^2\phi_i\leq \sin^2\phi_i\cos(\frac{\pi}{2i})+ \cos^2\phi_i$,
or equivalently,
\begin{align}\label{eq-lft}
\cos(2\phi_i)\leq\sin^2\phi_i\cos(\theta_{i,j'}-\theta_{i,j})+
\cos^2\phi_i\end{align} and
\begin{align}\label{eq-rght}\sin^2\phi_i\cos(\theta_{i,j'}-\theta_{i,j})+
\cos^2\phi_i\leq \sin^2\phi_i\cos\Big(\frac{\pi}{2i}\Big)+
\cos^2\phi_i.\end{align} Noticing that $\frac{\pi}{t}\leq
2\phi_i\leq\pi-\frac{\pi}{t}~\big($because $i\in[1:t]$ and by
Construction 1, $\phi_i=\frac{\pi}{2t}i\big)$, then
$-\cos\frac{\pi}{2t}\leq-\cos\frac{\pi}{t}\leq \cos2\phi_i$, so by
\eqref{eq-lft}, we can obtain
\begin{align}\label{eq-abs-lft}
-\cos\frac{\pi}{2t}\leq\sin^2\phi_i\cos(\theta_{i,j'}-\theta_{i,j})+
\cos^2\phi_i.\end{align} On the other hand, we can prove
$\sin^2\phi_i\cos(\theta_{i,j'}-\theta_{i,j})+
\cos^2\phi_i\leq\cos\frac{\pi}{2t}$ as follows. Let
$\frac{\pi}{2t}i=x$ and $2i=\ell$, then
$\frac{\pi}{4t}=\frac{x}{\ell}$ and by Lemma \ref{lem-agl-dlong},
we can obtain
\begin{align}\label{eq-frm-lem}
\sin\Big(\frac{\pi}{2t}i\Big)\sin\frac{\pi}{4i}\geq
\sin\frac{\pi}{4t}.\end{align} Therefore,
\begin{align}\label{eq-usn-lem}\sin^2\phi_i\cos\frac{\pi}{2i}+
\cos^2\phi_i&=\sin^2\phi_i\cos\frac{\pi}{2i}+
1-\sin^2\phi_i\nonumber\\&=1-\sin^2\phi_i\left(1-\cos\frac{\pi}{2i}\right)\nonumber\\
&=1-2\sin^2\phi_i\sin^2\frac{\pi}{4i}\nonumber\\
&=1-2\sin^2\Big(\frac{\pi}{2t}i\Big)\sin^2\frac{\pi}{4i}\nonumber\\&\leq
1-2\sin^2\frac{\pi}{4t}\nonumber\\&=\cos\frac{\pi}{2t}\end{align}
where the inequality comes from \eqref{eq-frm-lem}. Combining
\eqref{eq-usn-lem} and \eqref{eq-rght}, we can obtain
\begin{align}\label{eq-abs-rght}\sin^2\phi_i\cos(\theta_{i,j'}-\theta_{i,j})+ \cos^2\phi_i\leq
\cos\frac{\pi}{2t}.\end{align} From \eqref{dot-u-up},
\eqref{eq-abs-lft} and \eqref{eq-abs-rght}, we can obtain
$|\langle\bm u, \bm
u'\rangle|=|\sin^2\phi_i\cos(\theta_{i,j'}-\theta_{i,j})+
\cos^2\phi_i|\leq \cos\frac{\pi}{2t}$.

By the above discussions, we proved $|\langle\bm u, \bm
u'\rangle|\leq \cos\frac{\pi}{2t}$ for all distinct $\bm u,\bm
u'\in\Omega$.
\end{proof}

About the height profile and decoding of the codes obtained from
Construction 1, we have the following theorem.

\begin{thm}\label{Cntrn-rdncy-3}
Let $H$ and $\mathcal C$ be obtained from Construction 1. Then
$$\Gamma_{2}(\mathcal C)\leq
\frac{2n}{\sin\frac{\pi}{\sqrt{2(n-1)}}}.$$ Moreover, the decoder
$\mathcal D_1$ for $\mathcal C$ can correct a single error with
respect to the threshold pair $(\delta, \Delta)$ such that
$\delta=1$ and
$$\Delta=\Big(\cot\frac{\pi}{2\sqrt{2(n-1)}}+1\Big)n.$$
\end{thm}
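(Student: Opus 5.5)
Both claims follow by combining Lemma~\ref{lem-Cnstrn-bnd} with the general results of Section~III, once the parameter $\rho$ is expressed in terms of $n$. Since $n=2t^2+1$, we have $t=\sqrt{(n-1)/2}$, and therefore
\[
\frac{\pi}{2t}=\frac{\pi}{\sqrt{2(n-1)}}.
\]
Set $\rho=\cos\frac{\pi}{2t}$. Because $t>3$, this satisfies $0\leq\rho<1$. Every column of $H$ is a unit vector by construction, since each lies on the unit sphere in spherical coordinates.

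For the first claim, apply Theorem~\ref{thm-Con-Frmk} with $m=2$. In that case $J'=\{j'\}$ is a single index, and $\bm u=a\bm h_{j'}$ ranges over the span of one column. Condition (2) then reduces to
\[
|\langle \bm h_j,\bm h_{j'}\rangle|\leq\rho \quad\text{for all } j\neq j'.
\]
This is exactly Lemma~\ref{lem-Cnstrn-bnd}; note that the columns are distinct vectors of $\Omega$. Since $\sqrt{1-\rho^2}=\sin\frac{\pi}{2t}$, Theorem~\ref{thm-Con-Frmk} yields
\[
\Gamma_2(\mathcal C)\leq \frac{2n}{\sin\frac{\pi}{\sqrt{2(n-1)}}}.
\]
One requirement of Theorem~\ref{thm-Con-Frmk} must be checked: $m=2\in[1:r+1]$ with $r=3$, which holds.

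For the decoder claim, note that $\mathcal D_1$ only requires unit-norm columns and a maximal coherence $\rho<1$. Both were established above, so Theorem~\ref{thm-crt-1-err} applies directly. It gives correction of a single error with respect to $\delta=1$ and
\[
\Delta=\Bigl(\sqrt{\tfrac{1+\rho}{1-\rho}}+1\Bigr)n.
\]

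The remaining step, which is the only point needing care, is the half-angle identity. For $\rho=\cos x$,
\[
\sqrt{\frac{1+\cos x}{1-\cos x}}=\sqrt{\frac{2\cos^2(x/2)}{2\sin^2(x/2)}}=\cot\frac{x}{2}.
\]
This uses $x/2\in(0,\pi/2)$, so that the cotangent is positive. With $x=\frac{\pi}{\sqrt{2(n-1)}}$ we get $\cot\frac{\pi}{2\sqrt{2(n-1)}}$, which is the stated $\Delta$.

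A minor subtlety is that $\rho$ in the decoder is defined as the exact maximum coherence, which may be smaller than $\cos\frac{\pi}{2t}$. Since $\Delta$ is increasing in $\rho$, correcting with respect to a smaller threshold implies correcting with respect to a larger one. Alternatively, one can run $\mathcal D_1$ with $\theta$ defined from the upper bound $\rho=\cos\frac{\pi}{2t}$. The proof of Lemma~\ref{lem-exp-dvn} only used $|\rho_{j,j_0}|\leq\rho$, so it goes through unchanged.
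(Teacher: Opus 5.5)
Your proposal is correct and follows the paper's proof: take $\rho=\cos\frac{\pi}{2t}$ with $t=\sqrt{(n-1)/2}$ from Lemma~\ref{lem-Cnstrn-bnd}, apply Theorem~\ref{thm-Con-Frmk} with $m=2$ and Theorem~\ref{thm-crt-1-err}, and simplify $\sqrt{(1+\rho)/(1-\rho)}$ with the half-angle identity. Your remark about the exact coherence is harmless but not needed, because the bound $\cos\frac{\pi}{2t}$ is actually attained by $(0,0,1)^\top$ paired with any vector of $\Omega_1$.
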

\begin{proof}
By Construction 1, it is easy to see that $H$ satisfies condition
(1) of Theorem \ref{thm-Con-Frmk} and $t=\sqrt{\frac{n-1}{2}}$. By
Lemma \ref{lem-Cnstrn-bnd}, we have $|\langle\bm u, \bm
u'\rangle|\leq\rho=\cos\frac{\pi}{2t}=\cos\frac{\pi}{\sqrt{2(n-1)}}$
for all distinct $\bm u,\bm u'\in\Omega$. Therefore, by Theorem
\ref{thm-Con-Frmk}, we have $$\Gamma_{2}(\mathcal C)\leq
\frac{2n}{\sqrt{1-\cos^2\frac{\pi}{\sqrt{2(n-1)}}}}
=\frac{2n}{\sin\frac{\pi}{\sqrt{2(n-1)}}}.$$ Moreover, by Theorem
\ref{thm-crt-1-err}, the decoder $\mathcal D_1$ for $\mathcal C$
can correct a single error with respect to the threshold pair
$(\delta, \Delta)$ such that $\delta=1$ and
\begin{align*}
\Delta&=\frac{\sqrt{1+\cos\frac{\pi}{\sqrt{2(n-1)}}}
+\sqrt{1-\cos\frac{\pi}{\sqrt{2(n-1)}}}}{\sqrt{1-\cos\frac{\pi}{\sqrt{2(n-1)}}}}n\\
&=\Big(\cot\frac{\pi}{2\sqrt{2(n-1)}}+1\Big)n
\end{align*} which completes the proof.
\end{proof}

In general, given any $n>3$, let
$t=\Big\lceil\sqrt{(n-1)/2}\Big\rceil$, then $n\leq 2t^2+1$. Let
$\Omega$ be obtained from Construction 1. By Lemma
\ref{lem-Cnstrn-bnd}, for all distinct $\bm u,\bm u'\in\Omega$, we
have
$$|\langle\bm u, \bm
u'\rangle|\leq\rho=\cos\frac{\pi}{2t}=\cos\frac{\pi}{2\Big\lceil\sqrt{(n-1)/2}\Big\rceil}.$$
Let $H$ be a $3\times n$ matrix consists of any given $n$ vectors
in $\Omega$ and $\mathcal C$ be the code that has $H$ as a parity
check matrix. Similar to Theorem \ref{Cntrn-rdncy-3}, we can
obtain $$\Gamma_{2}(\mathcal C)\leq
\frac{2n}{\sin\frac{\pi}{2\Big\lceil\sqrt{(n-1)/2}\Big\rceil}}=
O(n\sqrt{n})$$ and
$$\Delta=\Big(\cot\frac{\pi}{4\Big\lceil\sqrt{(n-1)/2}\Big\rceil}+1\Big)n=
O(n\sqrt{n}).$$ In fact, as $n\rightarrow\infty$, the limit
superiors of $\Gamma_{2}(\mathcal C)/(n\sqrt{n})$ and
$\Delta/(n\sqrt{n})$ are both bounded above by $2\sqrt{2}/\pi$.

\end{document}